\title{New families of non-Reed-Solomon MDS codes}
\author{Lingfei Jin, Liming Ma, Chaoping Xing and Haiyan Zhou
\thanks{L. Jin is with the Shanghai Key Laboratory of Intelligent Information Processing, School of Computer Science, Fudan University, Shanghai 200433, China, and with the State Key Laboratory of Cryptology, P. O. Box, 5159, Beijing 100878, China (emails: lfjin@fudan.edu.cn).}
\thanks{L. Ma is with the School of Mathematical Sciences, University of Science and Technology of China, Hefei 230026, China (e-mail: lmma20@ustc.edu.cn).}
\thanks{C. Xing is with School of Electronic Information and Electrical Engineering, Shanghai Jiao Tong University, Shanghai 200240, China (email: xingcp@sjtu.edu.cn).}
\thanks{H. Zhou is with School of Mathematical Sciences, Nanjing Normal University, Nanjing 210023, China (email: 05336@njnu.edu.cn).}

}
\newtheorem{lemma}{Lemma}[section]
\newtheorem{theorem}[lemma]{Theorem}
\newtheorem{maintheorem}[lemma]{Main Theorem}
\newtheorem{cor}[lemma]{Corollary}
\newtheorem{prop}[lemma]{Proposition}
\newtheorem{ex}[lemma]{Example}
\newtheorem{rem}[lemma]{Remark}
\newtheorem{defn}{Definition}
\theoremstyle{remark}
\newtheorem{rmk}{Remark}
\renewcommand{\epsilon}{\varepsilon}
\renewcommand{\le}{\leqslant}
\renewcommand{\ge}{\geqslant}
\newcommand{\vnote}[1]{}
\def\F{\mathbb{F}}
\newcommand{\Ga}{\alpha}
\newcommand{\Gb}{\beta}
\def \bc {{\mathbf c}}
\def \bx {{\mathbf x}}
\def \by {{\mathbf y}}
\def \bv {{\mathbf v}}
\def\span{{\rm span}}
\def\bGa{\boldsymbol{\alpha}}
\newcommand \mcal \mathcal
\begin{document}
\maketitle

\begin{abstract}
MDS codes have garnered significant attention due to their wide applications in practice. To date, most known MDS codes are equivalent to Reed-Solomon codes.
The construction of non-Reed-Solomon (non-RS) type MDS codes has emerged as an intriguing and important problem in both coding theory and finite geometry.
Although some constructions of non-RS type MDS codes have been presented in the literature, the parameters of these MDS codes remain subject to strict constraints.
In this paper, we introduce a  general framework of constructing $[n,k]$ MDS codes using the idea of selecting a suitable set of evaluation polynomials and a set of evaluation points such that all nonzero polynomials have at most $k-1$ zeros in the evaluation set. Moreover, these MDS codes can be proved to be non-Reed-Solomon  by computing their Schur squares. Furthermore, several explicit constructions of  non-RS MDS codes  are given by converting to combinatorial problems. As a result, new families of non-RS MDS codes with much more flexible lengths can be obtained and most of them are not covered by the known results.
\end{abstract}


\section{Introduction}
An $[n,k,d]$-linear code $C$ over the finite field $\F_q$ is a $k$-dimensional vector subspace of $\F_q^n$ with the minimum Hamming distance $d$.
The well-known Singleton bound states that $d\le n-k+1$.
If the minimum distance of a code $C$ attains the Singleton bound, then $C$ is called a maximum distance separable (MDS) code. Since the birth of coding theory, MDS codes have received great attention for their maximal error correcting capability.
As a result, various properties of MDS codes have been well investigated, such as the existence of MDS codes \cite{Dau14}, classification  of MDS codes \cite{Kok15}, balanced MDS codes \cite{Dau}, the lowest density MDS codes \cite{Blaum} and  non-Reed-Solomon MDS codes \cite{RL}.
In addition, MDS codes are closely connected to combinatorial designs and finite geometry \cite{Mac}.

Among them, the existence of MDS codes have achieved much attention.
There is a famous conjecture on MDS codes which says that the length of an MDS code over $\F_q$ is at most $q+2$, more precisely, $n\le q+1$ except for some exceptional cases \cite{Se55}.
This conjecture was proved for MDS codes over prime fields by Ball \cite{B12}.
Until now, the best known MDS codes are the so-called  Reed-Solomon (RS) codes and their generalizations  which  have found wide applications in practice.
We know that there exist $[n\le q,k,n-k+1]$ Reed-Solomon (RS) codes over $\F_q$ for $1\le k\le q$.
In fact, the lengths of extended generalized Reed-Solomon (GRS) codes can go up to $q+1$.
Moreover, the dual of a GRS code is also a GRS code.
In the literature, properties of GRS codes have been extensively studied for their wide applications such as in cryptography and distributed storage systems.

In coding theory, MDS codes which are not equivalent to Reed-Solomon codes are called non-Reed-Solomon type MDS codes.
Since most of the known MDS codes are equivalent to RS codes, constructing non-RS type MDS codes have become an interesting research topic.
In 1989, Roth and Lempel gave a class of non-RS MDS codes via generator matrices by employing the subsets of the finite field $\F_q$ with some special properties \cite{RL}. They showed that the value of $n$ is at least $q/2+2$ for even $q$ and $3\le k\le q/2-1$, while $n\ge k+3$ for even $q$ and $q/2\le k\le q-1$.
Inspired by the work \cite{Sheekey} of Sheekey on twisted Gabidulin codes,
Beelen {\it et al.} introduced a representative work given in \cite{Beelen17, Beelen22} which proposed a new class of MDS codes called twisted RS codes.
The idea of constructing  twisted RS codes is using spaces of polynomials that may contain elements of degree larger than $k$ for evaluation.
However, twisted RS codes are not MDS in general.
The authors presented two explicit subclasses of twisted RS codes that are MDS by using the properties of multiplicative subgroups and additive subgroups respectively. Thus, they can obtain MDS codes with length $(n-1)|(q-1)$ and $(n-1)|q$ that are non-RS type.
However, it seems that there are strong restrictions on the  lengths of these MDS twisted Reed-Solomon codes.
In \cite{Beelen18}, a generalization of twisted RS codes are given by using the idea of multi-twists which includes the codes given in \cite{Beelen17} as a special case.
The corresponding dual codes are discussed as well.
Following the work of Beelen {\it et al.}, many work have been done on exploring the structure and properties of twisted RS codes, such as self-dual MDS codes \cite{Huang21,Sui22}, LCD MDS codes \cite{Liu21, Wu21} and near MDS codes \cite{Sui}.

Recently, Chen \cite{Chen24} constructed some new non-RS MDS codes from algebraic curves. Particularly, the author obtained some non-RS MDS codes with length $O(q^{1/4})$ from elliptic curve codes. It was further showed in \cite{Chen24} that the length of non-RS MDS codes can achieve $O(q^{1/k})$ from algebraic curves with higher genus under certain conditions.
Li  {\it et al.}  designed some new cyclic MDS codes for certain parameters by determining the solutions of the system of polynomial equations
\cite{LC24}. It was further shown that those codes are non-RS type. The MDS codes obtained have length $n|(q-1)$  where $q=p^r$ for certain $p$, which also has strong restrictions. Wu  {\it et al.} studied the extended codes of Roth-Lempel codes \cite{Wu24}. They provided a sufficient and necessary condition of such codes being non-RS MDS codes, however, they only provided explicit non-RS MDS codes over small finite fields.
Therefore, the existence and explicit constructions  of non-RS type MDS codes for all possible parameters remain an open problem.

\subsection{Our approaches and results}

It is known that any MDS code of length $n$ and dimension $k$ is equivalent to a RS code for $k<3$ and $n-k<3$ \cite{Beelen17}. Thus, we only consider the construction of non-RS MDS codes for  $3\le k\le n-3$. As the dual code of a RS code is again a RS code, hence we can further assume that $3\le k\le n/2$ in the following context.

Motivated by the constructions in \cite{Beelen17,LC24}, we provide a general framework  of constructing non-RS type MDS codes  by polynomial evaluations where the polynomial space is spanned by some monomials. More specifically, we choose pairwise distinct polynomial evaluation points $\Ga_1,\Ga_2,\dots,\Ga_n$ from $\F_q$ and some non-negative integers $i_1<i_2<\dots<i_k$ and consider the evaluation code
\[C(\bGa,I)=\{(f(\Ga_1),f(\Ga_2),\dots,f(\Ga_n)):\; f\in \span_{\F_q}\{x^{i_1},x^{i_2},\dots,x^{i_k}\}\},\]
where $\bGa=\{\Ga_1,\Ga_2,\dots,\Ga_n\}$ and $I=\{i_1,i_2,\dots,i_k\}$. Without loss of generality, we may assume that $i_1=0$ (otherwise, we get an equivalent code).

To show that $C(\bGa,I)$ is a non-RS MDS code, we have to show that
\begin{itemize}
\item[(i)] $C(\bGa,I)$ is an $[n,k,n-k+1]$-MDS code;
\item[(ii)] $C(\bGa,I)$ is not equivalent to an RS code.
\end{itemize}
The length and dimension of $C(\bGa,I)$ is clear.
To show the minimum distance of $C(\bGa,I)$ is $n-k+1$, it is sufficient to show that any nonzero polynomial $f(x)\in V_I:=\span_{\F_q}\{x^{i_1},x^{i_2},\dots,x^{i_k}\}$ has at most $k-1$ roots among $\Ga_1,\dots,\Ga_n$. This is equivalent to the fact that
\[ \deg (\gcd(f(x),\prod_{i=1}^n(x-\alpha_i)))\le k-1.\]
For $k\le n/2$, to show that $C(\bGa,I)$  is not equivalent to an RS code, it is sufficient to show that the Schur square $C^{\star2}(\bGa,I)$
has dimension at least $2k$ (we refer to  Subsection \ref{sub:2.3}  for the definition of Schur square).
This is because the dimension of the Schur square of an $[n,k]$ RS code with $k\le n/2$ has dimension $2k-1$ (see Subsection \ref{sub:2.3}).

One of the main objectives of this paper is to show that $C(\bGa,I)$ are non-RS MDS codes for any non-arithmetic progression sets $I$ (we refer to Subsection \ref{sub:2.2} for the definition of arithmetic progression sets). More precisely speaking, given $3\le k\le n/2$ and sufficiently large $q$, then for any  non-arithmetic progression set  $I$ of size $k$, there exists an evaluation set $\bGa$ of size $n$ such that $C(\bGa,I)$ is a non-RS MDS code. Note that $C(\bGa,I)$ is a non-RS MDS code if and only if the dual $C^\perp(\bGa,I)$ is a non-RS MDS code. Thus,  we may assume that  $3\le k\le n/2$.

For arbitrary  non-arithmetic progression set  $I$, we have the following existence result for non-RS MDS codes.
\begin{maintheorem}\label{main:1}
Let $q$ be a prime power and let $n$ and $k$ be positive integers with $3\le k\le n/2$.
If $\binom{q}{n}>\frac{q^k-1}{q-1}  \binom{m_I}{k}\binom{q-k}{n-k}$ with $m_I=\max\{i:\; i\in I\}$ for any  non-arithmetic progression set  $I\subseteq \mathbb{N}$ of size $k$, then there exists an evaluation set $\bGa$ of size $n$ such that $C(\bGa,I)$ is a non-RS MDS code.
\end{maintheorem}

The Main Theorem \ref{main:1} shows that for a non-arithmetic progression set  $I$ of size $k$ with $3\le k\le n-3$ and $n=O\left(\frac{q^{\frac1k}}{m_I}\right)$ (see Remark \ref{rem:3.5} for the detailed computation), there always exists an $[n,k]$ non-RS MDS code. In particular, we may take $I=\{0,1,\dots,i-1,i+1,i+2,\dots,k\}$ with $1\le i\le k-1$, then $I$ is a  non-arithmetic progression set  and we have $m_I=k$. Thus, there exists an $[n,k]$-non RS MDS code as long as $3\le k\le n/2$ and $n=O\left(q^{\frac1k}\right)$.

\begin{rem}{Recently, non-RS MDS codes were constructed with length $n=O\left(q^{\frac1k}\right)$ using algebraic curves  in \cite{Chen24}. Although our result does not improve upon \cite{Chen24}, the idea behind our construction is simpler and the required condition can be easily verified.}
\end{rem}

 The Main Theorem \ref{main:1} demonstrates  the existence of non-RS MDS codes for an arbitrary set $I$. Another main objective of this paper is to provide explicit constructions of non-RS MDS codes for specific choices of the set $I$.
In particular, we take
 $I=\{0,1,\dots,i-1,i+1,i+2,\dots,k\}$ with $1\le i\le k-1$. We split this case into two subcases, noting that the case where $i=k-1$ yields a more interesting result.

 \begin{maintheorem}\label{main:2}
Let $I=\{0,1,\dots,k-2,k\}$ with $3\le k\le n/2$. Then there exists a $q$-ary  non-RS MDS code $C(\bGa,I)$ which can be explicitly constructed, if one of the following conditions is satisfied.
\begin{itemize}
\item[{\rm (i)}] $2k\le n<\frac{q}k+\frac{k+1}2$ for a prime $q$ (see Corollary \ref{cor:4.4}); or
\item[{\rm (ii)}] $2k\le n\le  \max\{1, [\frac{ p}{k}]\}\cdot \frac{q}p$ and $p\nmid k$, where $p$ is the characteristic of $\F_q$ (see Theorems \ref{thm:4.12} and \ref{thm:4.15}).
\end{itemize}\end{maintheorem}

\begin{maintheorem}\label{main:3}
Let $q=p^m$ be a prime power. Let $r$ be a positive integer.
Let  $I=\{0,1,2,\cdots,k-r-1, k-r+1,k-r+2,\cdots,k\}$ with $2\le r\le k-1$ and $3\le k\le n/2$. Then there exists a $q$-ary  non-RS MDS code $C(\bGa,I)$ which can be explicitly constructed, if one of the following conditions is satisfied.
\begin{itemize}
\item[{\rm (i)}] $2k\le n\le \sqrt[r]{r!\cdot q}/k$ for a prime $q$ (see Corollary \ref{cor:6.2});  or
\item[{\rm (ii)}] $2k\le n \le \max\{1, [\frac{\sqrt[r]{r!\cdot p}}{k}]\}\cdot p^{t}$, where $p$ is the characteristic of $\F_q$  and $t=\lfloor \frac{m-1}{r}\rfloor$ (see Theorems \ref{thm:6.3} and \ref{thm:6.4}).
\end{itemize}
\end{maintheorem}

\begin{rem}{Although some non-RS MDS codes have been constructed, the code lengths remain limited.
Compared with the results in \cite{Chen24}, the Main Theorem \ref{main:2} can largely improve that in \cite{Chen24}, especially for $r=1$ and a prime $p$.
 Furthermore, in comparison with the results  for $(n-1)|(q-1)$ and  $(n-1)|(q-1)$) in \cite{Beelen17} and  $(n-1)|q$ in \cite{LC24}, the Main Theorems \ref{main:2} and \ref{main:3} offer greater flexibility in choosing code lengths, thereby allowing us to cover a larger range of lengths.	
}

\end{rem}

\begin{rem}{In this paper, we consider $I=\{0,1,2,\cdots,k-r-1, k-r+1,k-r+2,\cdots,k\}$ and define
$C(\bGa,I)=\{(f(\Ga_1),f(\Ga_2),\dots,f(\Ga_n)):\; f\in V_I\}$. It is obvious to see that its Schur square $C^{\star 2}(\bGa,I)$ is an $[n,2k,\ge n-2k]$ code,
which has a large minimum distance. However, the Schur square of the Roth-Lempel code \cite{RL} has a significantly smaller minimum distance. Therefore, the codes we construct are not equivalent to the Roth-Lempel codes.}
\end{rem}

\subsection{Organization}
This paper is organized as follows. In section \ref{sec:2}, we introduce the definitions and properties of generalized Reed-Solomon codes, arithmetic progression sets and Schur products of linear codes.
Section \ref{sec:3} presents a general framework of non-RS MDS codes by selecting appropriate evaluation polynomials and points to ensure that all polynomials of degree at least $k$ have at most $k-1$ zeros in the set of evaluation points. Section \ref{sec:4} provide existence results of non-RS MDS codes using combinatorial counting arguments. In Sections \ref{sec:5.1} and \ref{sec:6}, explicit constructions of non-RS MDS codes are given by utilizing the general framework established in Section \ref{sec:3}.

\section{Preliminaries}\label{sec:2}

In this section, we review  key definitions and established results related to generalized Reed-Solomon codes, arithmetic progression sets and Schur products of linear codes, which will be useful for the  discussions in the following sections of this paper.

\subsection{Generalized Reed-Solomon codes}\label{sub:2.1}
Let $\F_q$ be the finite field of $q$ elements, where $q$ is a power of a prime $p$.
Let $\{\Ga_1,\Ga_2,\dots,\Ga_n\}$  be the set of $n$ distinct elements of $\F_q$ and denote by $\bGa=\{\alpha_1,\Ga_2,\cdots,\alpha_n\}$.
Let $V$  be a $k$-dimensional vector subspace over $\F_q$ of the polynomial ring $\F_q[x]$.
We define the evaluation map of $V$ on $\bGa$ by
\[ ev_{\bGa}(\cdot): V\rightarrow \F_q^n,\quad \quad f\mapsto (f(\alpha_1),f(\alpha_2),\cdots,f(\alpha_n)).\]
If we consider $V_{k-1}=\{f(x)\in \F_q[x]: \deg(f(x))\le k-1\}$, then it is well-known that the image $ev_{\bGa}(V_{k-1})$ is the $[n,k]$ Reed-Solomon code with evaluation set $\bGa$.
Since any nonzero $f\in V_{k-1}$ is of degree at most $k-1$, the polynomial $f$ has at most $k-1$ zeros. This shows that the $[n,k]$ RS code has minimum distance at least $n-k+1$, and hence it is an MDS code.

Two codes $C_1$ and $C_2$ and  are called equivalent if $C_2$ can be obtained from $C_1$ by a permutation of coordinates and a component-wise multiplication with some vector $\bv=(v_1,v_2,\cdots,v_n)\in \F_q^n$ of Hamming weight $n$, i.e., $v_i\neq 0$ for $i=1,2,\cdots,n$.
If two codes $C_1$ and $C_2$ are equivalent, then they share the same code parameters.

A generalized Reed-Solomon (GRS) code is in fact a code which is equivalent to an RS code. Choose $n$ nonzero elements $v_1,v_2,\cdots,v_n$ of $\F_q$ and put $\bv=(v_1,v_2,\cdots, v_n)$.
Then the GRS code can be defined as
\[GRS_k(\bGa,\bv):=\{(v_1f(\alpha_1),v_2f(\alpha_2),\cdots,v_nf(\alpha_n)):f(x)\in\F_q[x], \deg(f(x))\le k-1\}.\]
Thus, the above GRS code is also an $[n,k,n-k+1]$ MDS code and its dual code is an MDS code as well.
Otherwise, we say that a code is a non-RS MDS code if it is not equivalent to a RS code.

\subsection{Arithmetic progression sets}\label{sub:2.2}
For a finite subset $I=\{i_1,i_2,\dots,i_\ell\}\subseteq \mathbb{Z}$ of integers with $i_1<i_2<\cdots<i_\ell$, we say that $I$ is an arithmetic progression if $i_j-i_{j-1}=i_{j+1}-i_j$ for all $2\le j\le \ell-1$. Otherwise, we say that $A$ is a non-arithmetic progression. Note that a set of one or two elements can be defined to be an arithmetic progression.

An inverse problem in additive number theory is the problem in which we attempt to deduce properties of the set $I$ from properties of their sumset $I+I=\{a+b:\; a,b\in I\}$.
The following result is the simplest case in additive number theory \cite[Theorem 1.2]{Na}.
\begin{lemma}\label{lem:2.1}
Let $I$ be a set of $\ell$ integers. Then $|I+I|\ge 2\ell-1$.
Moreover, if $I$ is a set of  $\ell$ integers with $|I+I|=2\ell-1$, then $I$ is an arithmetic progression.
\end{lemma}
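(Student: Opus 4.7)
My plan is to prove both parts by an elementary chain argument in $\mathbb{Z}$. Write the elements of $I$ in increasing order as $a_1 < a_2 < \cdots < a_\ell$. The cases $\ell \le 2$ are immediate because any set of at most two integers is, by definition, an arithmetic progression, so I will assume $\ell \ge 3$ throughout.

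For the lower bound $|I+I| \ge 2\ell-1$, I would exhibit a canonical strictly increasing sequence of $2\ell-1$ sums lying in $I+I$, namely
\begin{equation*}
2a_1 \;<\; a_1+a_2 \;<\; a_1+a_3 \;<\; \cdots \;<\; a_1+a_\ell \;<\; a_2+a_\ell \;<\; \cdots \;<\; 2a_\ell.
\end{equation*}
The first half is increasing because the second summand is, the second half is increasing because the first summand is, and the two halves are joined at the common value $a_1+a_\ell$. This yields $|I+I| \ge 2\ell-1$ with no further work.

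For the inverse part, assume $|I+I| = 2\ell-1$. Then the chain above is exactly $I+I$. The key step is to locate, for each $2 \le j \le \ell-1$, the element $a_2+a_j \in I+I$ within this chain. The two strict inequalities $a_1+a_j < a_2+a_j$ (from $a_1 < a_2$) and $a_2+a_j < a_2+a_\ell$ (from $a_j < a_\ell$, using $j<\ell$) place $a_2+a_j$ strictly between the $j$-th entry $a_1+a_j$ and the $(\ell+1)$-th entry $a_2+a_\ell$ of the chain. Hence $a_2+a_j$ must coincide with one of $a_1+a_{j+1},\ldots,a_1+a_\ell$. Writing $d := a_2-a_1>0$, this says $a_j + d \in I$ for every $2 \le j \le \ell-1$.

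To finish, I would observe that the map $j \mapsto a_j + d$ on $\{2,3,\dots,\ell-1\}$ is strictly increasing and takes values in $\{a_3,a_4,\dots,a_\ell\}$, two sets of the same cardinality $\ell-2$. Monotonicity then pins the bijection uniquely, forcing $a_{j+1} = a_j + d$ for every $2 \le j \le \ell-1$; combined with $a_2 - a_1 = d$, this shows that $I$ is an arithmetic progression with common difference $d$. The only delicate point is to rule out that $a_2+a_j$ could match an element of the second half $\{a_i+a_\ell : 2 \le i \le \ell\}$ of the chain, but this is precisely what the inequality $a_2+a_j < a_2+a_\ell$ (valid for $j<\ell$) prevents, so no real obstacle arises.
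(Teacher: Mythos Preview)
Your proof is correct. Note, however, that the paper does not actually supply its own proof of this lemma: it simply quotes the result from Nathanson's book \cite[Theorem 1.2]{Na}. The argument you wrote is precisely the classical elementary proof that appears there, so there is nothing to compare---you have reconstructed the standard proof that the paper cites.
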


The above Lemma \ref{lem:2.1} tells us that $I+I$ has cardinality at least $2\ell$ for any non-arithmetic progression $I$. For any $\ell\ge 1$, there exists an arithmetic progression of $\ell$ elements. For instance, $I=\{0,1,2,\dots,\ell-1\}$. On the other hand, for any integer $\ell\ge 3$, there also exists a non-arithmetic progression of $\ell$ elements, for example  $I=\{0,1,2,3,\dots,\ell-2,\ell\}$ or $I=\{0,1,2,\dots,\ell-3,\ell-1,\ell\}$.

\subsection{Schur products of linear codes}\label{sub:2.3}
There has been a growing interest in the Schur products of linear codes, not only as an independent object of study, but also due to its relevance in various applications of coding theory, see \cite{Cramer15, Rand15} and references therein.

\begin{defn}[Schur product]
For  two vectors $\bx=(x_1,x_2,\cdots,x_n)$, $\by=(y_1,y_2,\cdots,y_n)$ in $\F_q^n$, the Schur product or component-wise product of $\bx$ and $\by$ is defined as $$\bx\star \by=(x_1y_1,x_2y_2,\cdots,x_ny_n).$$
For two linear codes $C_1, C_2\subseteq\F_q^n$, the Schur product of $C_1$ and $C_2$ is a linear subspace of $\F_q^n$ generated by all Schur products $\bc_1\star\bc_2$ with $\bc_1\in C_1$, $\bc_2\in C_2$, namely
\[C_1\star C_2:=\span_{\F_q}\{\bc_1\star\bc_2: \bc_1\in C_1,\bc_2\in C_2\}.\]
If $C_1=C_2=C$, then we call $C^{\star2}:=C\star C$ the Schur square of $C$.
\end{defn}

For any $[n,k]$-linear code $C$, it is easy to see that $\dim(C^{\star2})\le\min\{n,k(k-1)/2\}$.
If $C$ is an MDS code, then there is a lower bound given in \cite{Rand15} saying that $\dim(C^{\star2})\ge \min\{2k-1,n\}$.
There is a criterion to determine whether an MDS code is a GRS code from \cite[Theorem 23]{MZ}. However, we need the following simpler result.
\begin{lemma}\label{lem:2.2}
Let $C\subseteq \F_q^n$ be an MDS code with $\dim(C)\le n/2$.  If $\dim(C^{\star2})\ge 2\dim(C)$, then the code $C$ is not equivalent to a Reed-Solomon code.
\end{lemma}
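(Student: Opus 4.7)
The plan is to prove the contrapositive: if $C$ is equivalent to an $[n,k]$ Reed–Solomon code, then $\dim(C^{\star 2}) = 2k-1 < 2k$, which contradicts the hypothesis. So I assume throughout that $C$ is equivalent to an RS code $C'$, meaning $C$ is obtained from $C'$ by a coordinate permutation $\sigma$ together with component-wise multiplication by some vector $\bv = (v_1,\dots,v_n) \in (\F_q^*)^n$.

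The first step will be to verify that the Schur square behaves nicely under code equivalence. Explicitly, for $\bc_1,\bc_2 \in C$ obtained as $\bc_j = \bv \star \sigma(\bc_j')$ with $\bc_j' \in C'$, one computes
\[
\bc_1 \star \bc_2 \;=\; (\bv\star\bv) \star \sigma(\bc_1' \star \bc_2'),
\]
so that $C^{\star 2}$ is obtained from $(C')^{\star 2}$ by the same permutation and by component-wise multiplication by the vector $\bv \star \bv$, which again has all nonzero coordinates. Hence $\dim C^{\star 2} = \dim (C')^{\star 2}$, so it suffices to compute the Schur square dimension for the RS code $C'$ itself.

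Next I compute $\dim (C')^{\star 2}$ directly. Writing $C' = \{(f(\alpha_1),\dots,f(\alpha_n)) : f \in \F_q[x], \deg f \le k-1\}$ for distinct evaluation points $\alpha_1,\dots,\alpha_n \in \F_q$, the Schur product of two codewords is the evaluation vector of a product polynomial $f_1(x)f_2(x)$ of degree at most $2k-2$. Since every polynomial of degree at most $2k-2$ can be written as an $\F_q$-linear combination of such products (e.g.\ $x^i = x^i \cdot 1$ for $0 \le i \le k-1$ and $x^i = x^{k-1}\cdot x^{i-k+1}$ for $k \le i \le 2k-2$), I get
\[
(C')^{\star 2} \;=\; \{(g(\alpha_1),\dots,g(\alpha_n)) : \deg g \le 2k-2\}.
\]
Using the hypothesis $k \le n/2$, we have $2k-1 \le n$, so the evaluation map on $n$ distinct points is injective on the $(2k-1)$-dimensional space of polynomials of degree at most $2k-2$. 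Consequently $\dim (C')^{\star 2} = 2k-1$.

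Combining the two steps gives $\dim C^{\star 2} = 2k-1 < 2k = 2\dim C$, contradicting the assumption $\dim C^{\star 2} \ge 2\dim C$. I expect no significant obstacle here: the only mild subtlety is being careful that multiplication by a coordinate-wise nonzero vector and a coordinate permutation together preserve dimension, which is immediate since both operations are $\F_q$-linear isomorphisms of $\F_q^n$.
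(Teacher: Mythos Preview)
Your proof is correct and follows essentially the same approach as the paper: assume $C$ is (equivalent to) a GRS code, compute its Schur square as the evaluation code of polynomials of degree at most $2k-2$, and use $k\le n/2$ to conclude $\dim C^{\star 2}=2k-1$, a contradiction. You are slightly more explicit than the paper in verifying that the Schur square is compatible with code equivalence and that the evaluation map is injective on degree-$\le 2k-2$ polynomials, but the argument is the same.
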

\begin{proof}
Let $v_1,v_2,\cdots,v_n$ be nonzero elements of $\F_q$ and $\bv=(v_1,v_2,\cdots, v_n)$.
Suppose that the code $C$ was an $[n,k]$ MDS code given by the generalized Reed-Solomon code
\[GRS_k(\bGa,\bv)=\{(v_1f(\alpha_1),v_2f(\alpha_2),\cdots,v_nf(\alpha_n)):f(x)\in\F_q[x], \deg(f(x))\le k-1\}.\]
It is easy to verify that the Schur square of $C$ is
$$C^{\star 2}=GRS_{2k-1}(\bGa,\bv\star \bv).$$
Hence, the dimension of Schur square $C^{\star 2}$ is $2k-1$, which is a contradiction!
\end{proof}

For $k\le n/2$, the dimension of the Schur square of any $[n,k,n-k+1]_q$ generalized Reed-Solomon code is $2k-1$ from the proof of Lemma \ref{lem:2.2}.
Hence, any $[n,k,n-k+1]_q$ MDS code satisfying $\dim(C^{\star2})\ge 2k$ is a non-RS MDS code.
Thus, one can determine whether it is non-Reed-Solomon or not by computing the dimension of its Schur square.

\section{A general framework of non-Reed-Solomon MDS codes}\label{sec:3}
In this section, we provide a general framework for constructing non-Reed-Solomon (non-RS) MDS codes. A sufficient condition has been given to characterize whether a given MDS code is non-Reed-Solomon in Lemma \ref{lem:2.2}.

Reed-Solomon codes are constructed by using polynomials of degree at most $k-1$  for a code of dimension $k$.
In fact, if we choose different set $V$ of evaluation polynomials, the resulting code may not be equivalent to an RS code.
In this section, we focus on the construction of MDS codes which are not equivalent to RS codes by selecting $V$ to include
polynomials of degree at least $ k$.
However, we must carefully control the minimum distance of such codes to ensure that they are MDS codes.

Now let us describe the general framework of constructing our codes as follows.
Let $q$ be a prime power and $\F_q$ be the finite field with $q$ elements.
Let $\{\Ga_1,\Ga_2,\cdots,\Ga_n\}$ be a subset of $\F_q$.
Choose a vector space $V$ of dimension $k\le n/2$ in the polynomial ring $\F_q[x]$.
Define the code $C$ by
\begin{equation}\label{eq:1}
C=\{(f(\alpha_1),f(\alpha_2),\cdots,f(\alpha_n)): f(x)\in V\}.
\end{equation}
Then $C$ is a non-RS MDS code if the following conditions are satisfied:
\begin{itemize}
\item [(i)]  $\deg (\gcd(f(x),\prod_{i=1}^n(x-\alpha_i)))\le k-1$  for any polynomial $ f(x)\in V\setminus \{0\}$;
\item [(ii)]  $C$ is non-RS, i.e., $\dim(C^{\star2})\ge2k$.
\end{itemize}

It is easy to see that the length of $C$ is $n$ and the dimension of $C$ is $k$, i.e., $C$ is an $[n,k]$-linear code.
Furthermore, we can show that the code $C$ is an MDS code. It remains to prove that the minimum distance $d$ of $C$ achieves the Singleton bound $n-k+1$.
Since the degree of $\gcd(f(x),\prod_{i=1}^n(x-\alpha_i))$ is at most $k-1$ for any $f(x)\in V\setminus \{0\}$, the polynomial $f(x)$ has at most $k-1$ zeros in the set $\{\Ga_1,\Ga_2,\cdots,\Ga_n\}$, i.e., the Hamming weight of any nonzero codeword in $C$ is at least $n-k+1$. Thus, the minimum distance of $C$ is exactly $n-k+1$ from the Singleton bound, i.e., $C$ is an $[n,k,n-k+1]$ MDS code.

The main goal in the following is to choose suitable subspaces $V$ and evaluation sets $\{\Ga_1,\Ga_2,\cdots,\Ga_n\}$ such that the degree of $\gcd(f(x),\prod_{i=1}^n(x-\alpha_i))$ is at most $k-1$ for any polynomial $f(x)\in V\setminus \{0\}$.

We introduce the following notations.
\begin{itemize}
	\item $I=\{i_1,i_2,\cdots,i_k\}\subseteq \mathbb{N}$ with $|I|=k$;
	\item $m_I=\max\{i:i\in I\}$;
	\item  $V_I={\rm \span}_{\F_q}\{x^i:i\in I\}=\{\sum_{i\in I} a_ix^i: a_i\in \F_q \text{ for all } i\in I\}$;
	\item $Z(f)=\{\alpha\in \F_q: f(\alpha)=0\}$ for $f(x)\in V$;
	\item $S=\{ A\subseteq\F_q: |A|=k \mbox{ and } \exists f\in V_I\setminus \{0\} \mbox{ such that }  A\subseteq Z(f)\}$;
	\item  $T\subseteq \F_q$ with $|T|=n$;
	\item $T_k=\{A\subseteq T: |A|=k\}$ by all $k$-subsets of $T$.
	\end{itemize}

\begin{prop}\label{prop:3.1}
Let $n$ and $k$ be positive integers with $3\le k\le n/2$.
If there exists a subset $T$ of $F_q$ with size $n$ satisfying $T_k\cap S=\varnothing$, then the code $C(T,I)=\{(f(\alpha))_{\alpha\in T}: f\in V_I\}$ is an $[n,k]$ MDS code.
Moreover, if the set $I$ is a non-arithmetic progression, then $C(T,I)$ is not equivalent to a Reed-Solomon code.
\end{prop}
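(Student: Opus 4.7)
The statement bundles two claims about $C(T,I)$---that it is $[n,k,n-k+1]$-MDS, and, when $I$ is non-AP, that it is not equivalent to any Reed-Solomon code---so the plan is to verify them in turn, the first directly from the hypothesis and the second via Lemma \ref{lem:2.2} applied to the Schur square.

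For the MDS part, the hypothesis $T_k \cap S = \varnothing$ says precisely that every nonzero $f \in V_I$ has at most $k-1$ zeros in $T$. Since $n \geq 2k > k-1$, this forces $ev_T : V_I \to \F_q^n$ to be injective, so $\dim C(T,I) = k$; moreover, every nonzero codeword $ev_T(f)$ has at most $k-1$ vanishing entries and hence weight at least $n-k+1$, so combined with the Singleton bound, $C(T,I)$ is MDS with minimum distance exactly $n-k+1$. I expect no subtlety in this half.

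For the non-RS part, by Lemma \ref{lem:2.2} it suffices to show $\dim(C(T,I)^{\star 2}) \geq 2k$. Expanding the Schur product on the monomial basis $\{x^i : i \in I\}$ of $V_I$ gives
\[
C(T,I)^{\star 2} \;=\; ev_T(V_{I+I}), \qquad V_{I+I} := \span_{\F_q}\{x^\ell : \ell \in I+I\}.
\]
Lemma \ref{lem:2.1} applied to the non-AP set $I$ provides $|I+I| \geq 2k$, so $\dim V_{I+I} \geq 2k$. It then remains to argue that the restricted evaluation $ev_T : V_{I+I} \to \F_q^n$ has rank at least $2k$, equivalently, that its kernel $K := \{h \in V_{I+I} : h|_T \equiv 0\}$ satisfies $\dim K \leq |I+I| - 2k$.

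The main obstacle is precisely this kernel analysis. Any nonzero $h \in K$ must be divisible by $p_T(x) := \prod_{\alpha \in T}(x-\alpha)$, so $h = p_T \cdot q$ with $\deg q \leq 2m_I - n$; in the regime $n > 2m_I$ this already forces $q = 0$, whence $K = 0$ and $\dim C(T,I)^{\star 2} = |I+I| \geq 2k$ automatically. In the borderline regime $n \leq 2m_I$, one must simultaneously exploit the degree bound on $q$ and the requirement that $p_T \cdot q$ be supported only on the exponent set $I+I$, which forces the $(2m_I+1) - |I+I|$ coefficients of $p_T \cdot q$ indexed by $\{0,\ldots,2m_I\}\setminus(I+I)$ to vanish. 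This coefficient system is where the non-AP structure of $I$ must do genuine work; controlling it to obtain $\dim K \leq |I+I| - 2k$ is the delicate step. Once that bound is in hand, $\dim C(T,I)^{\star 2} \geq 2k$ follows and Lemma \ref{lem:2.2} yields the non-RS conclusion.
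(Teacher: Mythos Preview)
Your MDS argument is correct and matches the paper's. On the non-RS half you are actually more careful than the paper: its proof does not analyse $\ker(ev_T|_{V_{I+I}})$ at all but simply asserts ``$|I+I|\ge 2k$, hence $\dim C^{\star 2}(T,I)\ge 2k$'' and invokes Lemma~\ref{lem:2.2}. The ``delicate step'' you flag is exactly what the paper glosses over.

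That step, however, cannot be completed from the stated hypotheses, and the proposition itself fails at the boundary $n=2k$. Take $k=3$, $I=\{0,1,3\}$ (the $k=3$ case of $\{0,\dots,k-2,k\}$), so $I+I=\{0,1,2,3,4,6\}$ and $|I+I|=6$. Over $\F_{17}$ let $T=\{0,1,2,3,5,6\}$: one checks that no $3$-subset of $T$ sums to $0$, so $T_3\cap S=\varnothing$ and $C(T,I)$ is $[6,3,4]$-MDS; yet $\sum_{\alpha\in T}\alpha\equiv 0$, so $p_T$ has vanishing $x^{5}$-coefficient, lies in $V_{I+I}$, and gives $\dim K=1$, $\dim C^{\star 2}=5=2k-1$. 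Moreover, a ternary quadratic form $Q(X,Y,Z)$ vanishes on all columns $(1,\alpha,\alpha^3)$ iff $Q(1,x,x^3)\in V_{I+I}$ vanishes on $T$; thus $p_T$ supplies a conic through the six columns, necessarily irreducible (no three are collinear, by MDS), which forces $C(T,I)$ to be GRS. So the bound $\dim K\le |I+I|-2k$ you aim for is unobtainable from the hypotheses as written; both your argument and the paper's go through cleanly only under an extra assumption such as $n>2m_I$ (which kills $K$ outright and does cover the paper's explicit constructions whenever $n>2k$), while the boundary case $n=2k$ needs a separate check on $T$.
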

\begin{proof}
Since $T_k\cap S=\varnothing$,  any nonzero polynomial $f(x)$ in $V_I$ has at most $k-1$ zeros in the evaluation set $T$. It follows that the Hamming weight of codeword $(f(\alpha))_{\alpha\in T}$ is at least $n-k+1$.
From the Singleton bound, the minimum distance of $C(T,I)$ is $n-k+1$. Hence, $C(T,I)$ is an $[n,k]$ MDS code.

In order to show that the code $C(T,I)$ is non-Reed-Solomon, we need to determine the dimension of the Schur square of $C(T,I)$ given by $C^{\star2}(T,I)=\{(g(\alpha))_{\alpha\in T}: g\in \span_{\F_q}\{x^i: i\in I+I\}\}$.
If $I$ is a non-arithmetic progression, then the set $I+I=\{i+j:i,j\in I\}$ has cardinality at least $2k$ by Lemma \ref{lem:2.1}.
Hence, the Schur square $C^{\star2}(T,I)$ has dimension at least $2k$. By Lemma \ref{lem:2.2}, it is not equivalent to a Reed-Solomon code.
\end{proof}

\section{Existence results for non-Reed-Solomon MDS codes}\label{sec:4}
In this section, we use a combinatorial counting argument to show that there exists at least one set $T$ such that $T_k\cap S=\varnothing$ which ensures the MDS property. Furthermore, the resulting MDS code is non-RS as long as we choose an arbitrary non-arithmetic progression.

\begin{lemma}\label{lem:3.2}
Let $S$ and $m_I$ be defined as in Section \ref{sec:3}. Then the size of $S$ is upper bounded by $$|S|\le \frac{q^k-1}{q-1}\binom{m_I}{k}.$$
\end{lemma}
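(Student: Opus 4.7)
The plan is to bound $|S|$ by a double-counting argument: first bound the number of distinct zero sets $Z(f)$ as $f$ ranges over $V_I\setminus\{0\}$, then bound the number of $k$-subsets contained in each such zero set.

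\textbf{Step 1 (grouping by projective line).} Observe that for any nonzero scalar $\lambda\in\F_q^\ast$, one has $Z(\lambda f)=Z(f)$. Consequently, the map $f\mapsto Z(f)$ factors through the projective space $\mathbb{P}(V_I)$, and the number of distinct zero sets arising from $V_I\setminus\{0\}$ is at most $|\mathbb{P}(V_I)|=\frac{q^k-1}{q-1}$.

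\textbf{Step 2 (bounding each zero set).} For any $f\in V_I\setminus\{0\}$, we have $\deg(f)\le m_I$ since $V_I$ is spanned by monomials $x^i$ with $i\le m_I$. Therefore $|Z(f)|\le m_I$, and hence the number of $k$-subsets of $Z(f)$ is at most $\binom{m_I}{k}$.

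\textbf{Step 3 (assembly).} Every $A\in S$ is by definition a $k$-subset of $Z(f)$ for some $f\in V_I\setminus\{0\}$. Summing over one representative from each projective class gives
\[
|S|\;\le\;\sum_{[f]\in \mathbb{P}(V_I)}\binom{|Z(f)|}{k}\;\le\;\frac{q^k-1}{q-1}\binom{m_I}{k},
\]
which is the desired bound.

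There is no real obstacle here; the only point worth noting is that a given $A\in S$ may lie in several zero sets, but this only means the sum in Step~3 overcounts, which is harmless for an upper bound. The argument works exactly as stated, and the bound $\binom{m_I}{k}$ comes directly from the degree cap $m_I$ imposed by the definition of $V_I$.
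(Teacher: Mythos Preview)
Your proof is correct and essentially identical to the paper's argument: the paper phrases Step~1 as counting the $(q^k-1)/(q-1)$ ``monic'' representatives in $V_I\setminus\{0\}$ rather than projective classes, but this is the same normalization by scalar multiples, and Steps~2--3 match exactly.
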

\begin{proof}
For any monic polynomial $f\in V_I\setminus \{0\}$, $f$ has at most $m_I$ roots since $\deg(f)\le m_I$.
Then the number of $k$-subsets of the zeros of $f$ in $\F_q$ is upper bounded by
$$| \{ A\subseteq\F_q: |A|=k,  A\subseteq Z(f)\}|\le \binom{m_I}{k}.$$
This lemma holds true, since there are $(q^k-1)/(q-1)$ pairwise distinct monic polynomials in $V_I\setminus \{0\}$.
\end{proof}

\begin{lemma}\label{lem:3.3}
If  $\binom{q}{n}>\frac{q^k-1}{q-1}\binom{m_I}{k}\binom{q-k}{n-k}$, then there exists a set $T\subseteq\F_q$ with cardinality $|T|=n$ such that $$T_k\cap S=\varnothing.$$
\end{lemma}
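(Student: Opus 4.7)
The plan is to prove Lemma 3.3 by a straightforward double-counting (union bound) argument in the spirit of the probabilistic method. Call an $n$-subset $T \subseteq \mathbb{F}_q$ \emph{bad} if there exists some $A \in S$ with $A \subseteq T$, and \emph{good} otherwise; good $T$ are exactly those satisfying $T_k \cap S = \varnothing$. The total number of candidate $n$-subsets is $\binom{q}{n}$, so it suffices to show that the number of bad $n$-subsets is strictly less than $\binom{q}{n}$.

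First I would count the number of $n$-subsets of $\mathbb{F}_q$ that contain a fixed $A \in S$. Since $|A|=k$, this count is exactly $\binom{q-k}{n-k}$, as the remaining $n-k$ elements of $T$ are chosen freely from $\mathbb{F}_q \setminus A$. Applying the union bound over all $A \in S$, the number of bad $n$-subsets is at most
\[
\sum_{A \in S} \binom{q-k}{n-k} \;=\; |S| \cdot \binom{q-k}{n-k}.
\]
Next I would invoke Lemma \ref{lem:3.2}, which gives the bound $|S| \le \frac{q^k-1}{q-1}\binom{m_I}{k}$. Combining this with the hypothesis
\[
\binom{q}{n} > \frac{q^k-1}{q-1}\binom{m_I}{k}\binom{q-k}{n-k},
\]
we conclude that the number of bad $n$-subsets is strictly less than $\binom{q}{n}$, so at least one good $n$-subset $T$ must exist.

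There is essentially no obstacle here; the only thing to be careful about is making the $(T,A)$ double counting clean. Namely, one should implicitly view the argument as counting incidences in the bipartite structure of pairs $(T,A)$ with $A \in S$ and $A \subseteq T$: summing over $A$ first yields $|S|\binom{q-k}{n-k}$, while summing over $T$ first gives $\sum_{T} |T_k \cap S|$, so if every $T$ were bad then each summand would be at least $1$, forcing $\binom{q}{n} \le |S|\binom{q-k}{n-k}$ and contradicting the hypothesis. This completes the argument.
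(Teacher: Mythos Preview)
Your proof is correct and follows essentially the same approach as the paper: both argue by a union bound, noting that each $A\in S$ is contained in $\binom{q-k}{n-k}$ subsets $T$ of size $n$, invoking Lemma~\ref{lem:3.2} to bound $|S|$, and comparing the resulting count of bad $T$ with $\binom{q}{n}$.
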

\begin{proof}
For every set $A\in S$, there are at most $\binom{q-k}{n-k}$ sets $T$ satisfying $|T|=n$ and $A\in T_k$. By Lemma \ref{lem:3.2}, there are at most $\frac{q^k-1}{q-1} \binom{m_I}{k}\binom{q-k}{n-k}$ sets $T\subseteq\F_q$ such that $|T|=n$ and $T_k\cap S\neq\varnothing$. As long as $\binom{q}{n}>\frac{q^k-1}{q-1} \binom{m_I}{k}\binom{q-k}{n-k}$, one can find such a required set $T$.
\end{proof}

\begin{theorem}\label{thm:3.4}
Let $q$ be a prime power.
Let $n$ and $k$ be positive integers with $3\le k\le n/2$.
Let $I\subseteq \mathbb{N}$ be a non-arithmetic progression with size $|I|=k$ and let $m_I=\max\{i:i\in I\}$.
If $\binom{q}{n}>\frac{q^k-1}{q-1}  \binom{m_I}{k}\binom{q-k}{n-k}$, then there exists an $[n, k]$
non-RS MDS code.
 \end{theorem}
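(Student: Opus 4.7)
The plan is to derive Theorem \ref{thm:3.4} as a direct assembly of the two results immediately preceding it, namely Proposition \ref{prop:3.1} and Lemma \ref{lem:3.3}, so the work is essentially bookkeeping: verify that the hypotheses of those two lemmas match the hypotheses of the theorem, then let them do the rest.

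First, I would feed the numerical hypothesis $\binom{q}{n}>\frac{q^k-1}{q-1}\binom{m_I}{k}\binom{q-k}{n-k}$ straight into Lemma \ref{lem:3.3}. That lemma was tailored exactly to this inequality and produces a subset $T\subseteq\F_q$ with $|T|=n$ satisfying $T_k\cap S=\varnothing$. Here I would just remark that, since the inequality in the hypothesis of Theorem \ref{thm:3.4} is identical to the inequality in Lemma \ref{lem:3.3}, no calculation is required.

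Next, I would apply Proposition \ref{prop:3.1} to the set $T$ produced in the previous step. The first conclusion of that proposition gives that $C(T,I)=\{(f(\alpha))_{\alpha\in T}:f\in V_I\}$ is an $[n,k]$ MDS code, using $3\le k\le n/2$ and $T_k\cap S=\varnothing$. The second conclusion, which requires $I$ to be a non-arithmetic progression (exactly our standing assumption in the statement of the theorem), then yields that $C(T,I)$ is not equivalent to a Reed-Solomon code. Combining the two conclusions, $C(T,I)$ is an $[n,k]$ non-RS MDS code, as required.

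There is no real obstacle: the substantive content of the proof already sits in Lemma \ref{lem:3.2} (counting the bad $k$-subsets via $|S|\le\tfrac{q^k-1}{q-1}\binom{m_I}{k}$), Lemma \ref{lem:3.3} (the union-bound / pigeonhole step over $k$-subsets of $\F_q$), and Proposition \ref{prop:3.1} (the translation from the combinatorial condition $T_k\cap S=\varnothing$ to the coding-theoretic MDS property, together with the Schur-square argument using $|I+I|\ge 2k$ via Lemma \ref{lem:2.1} and Lemma \ref{lem:2.2}). The theorem itself should therefore read as a one-paragraph corollary, with the only stylistic choice being whether to recall the chain $|I+I|\ge 2k\Rightarrow \dim C^{\star 2}\ge 2k\Rightarrow C$ non-RS explicitly or simply to cite Proposition \ref{prop:3.1}; I would opt for the concise citation.
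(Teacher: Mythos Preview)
Your proposal is correct and matches the paper's own proof exactly: the paper simply states that the theorem follows from Lemma \ref{lem:3.3} and Proposition \ref{prop:3.1}. Your expanded bookkeeping is accurate and, if anything, more detailed than what the paper provides.
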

\begin{proof}
This theorem follows from Lemma \ref{lem:3.3} and Proposition \ref{prop:3.1}.
\end{proof}

\begin{rem}\label{rem:3.5}{\rm
The inequality in Theorem \ref{thm:3.4} is equivalent to
$$n(n-1)\cdots (n-k+1)<\frac{q(q-1)\cdots(q-k+1) \cdot (q-1)}{(q^k-1)\cdot \binom{m_I}{k}}.$$
This implies that $n=O\left(\frac{q^{\frac1k}}{m_I}\right)$. Hence, there exists an $[n,k]$ non-RS MDS code   with length $n=O\left(\frac{q^{\frac1k}}{m_I}\right)$.
}
\end{rem}

Now let us choose a particular non-arithmetic progression set $I$.
Let $I$ be the set $\{0,1,2,\cdots,k-2,k\}$ with cardinality $k$ and $V_I$ be the vector space spanned by $\{x^i: i\in I\}$ over $\F_q$.

\begin{theorem}\label{thm:4.2}
Let $n$ and $k$ be positive integers with $3\le k\le n/2$.
If $\binom{q}{n}>\binom{q}{k-1} \binom{q-k}{n-k}$, then there exists an $[n, k]$ non-RS MDS code.
 \end{theorem}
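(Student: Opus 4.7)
The plan is to specialize the general framework of Section \ref{sec:3} to the non-arithmetic progression $I=\{0,1,\dots,k-2,k\}$, which satisfies $m_I=k$, and then to sharpen the bound on $|S|$ coming from Lemma \ref{lem:3.2}. With this $I$ fixed, Proposition \ref{prop:3.1} reduces the problem to finding an evaluation set $T\subseteq\F_q$ of size $n$ with $T_k\cap S=\varnothing$, and the counting argument in the proof of Lemma \ref{lem:3.3} produces such a $T$ as soon as $\binom{q}{n}>|S|\cdot\binom{q-k}{n-k}$. The hypothesis of the theorem is exactly what one needs after establishing the sharper estimate $|S|\le\binom{q}{k-1}$, which improves on the general bound $|S|\le\frac{q^k-1}{q-1}\binom{m_I}{k}=\frac{q^k-1}{q-1}$ from Lemma \ref{lem:3.2}.

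The key step, which I expect to be the only substantive one, is to characterize $S$ explicitly. The crucial feature of $I$ is that every $f\in V_I$ has vanishing $x^{k-1}$-coefficient. Consequently, if a nonzero $f\in V_I$ has $k$ distinct roots $\alpha_1,\dots,\alpha_k$ in $\F_q$, then since $\deg f\le k$ we must in fact have $\deg f=k$ and $f(x)=c\prod_{i=1}^{k}(x-\alpha_i)$ for some nonzero $c\in\F_q$; expanding this product, the vanishing of the $x^{k-1}$-coefficient becomes $\alpha_1+\cdots+\alpha_k=0$. The converse is immediate: any $k$-subset of $\F_q$ with zero sum gives rise to such an $f\in V_I$. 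Hence
\[
S=\Bigl\{A\subseteq\F_q:\ |A|=k,\ \sum_{\alpha\in A}\alpha=0\Bigr\}.
\]

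To bound $|S|$, I will fix any total order on $\F_q$ and consider the map $A\mapsto A\setminus\{\max A\}$ from $S$ into the collection of $(k-1)$-subsets of $\F_q$. This map is injective, because the removed element is recoverable from $A\setminus\{\max A\}$ as the negative of the sum of its remaining entries. Therefore $|S|\le\binom{q}{k-1}$, and combining this with the counting argument of Lemma \ref{lem:3.3} and then Proposition \ref{prop:3.1} (note that $I=\{0,1,\dots,k-2,k\}$ is indeed a non-arithmetic progression for $k\ge3$) produces the desired $[n,k]$ non-RS MDS code $C(T,I)$. I do not anticipate any serious obstacle beyond the characterization of $S$ above, which is essentially forced by the absence of the $x^{k-1}$-term in $V_I$.
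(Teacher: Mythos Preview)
Your proposal is correct and follows essentially the same route as the paper: fix $I=\{0,1,\dots,k-2,k\}$, use Vieta's formula to identify $S$ with the zero-sum $k$-subsets of $\F_q$, bound $|S|\le\binom{q}{k-1}$, and then invoke the counting argument of Lemma~\ref{lem:3.3} together with Proposition~\ref{prop:3.1}. Your explicit injection $A\mapsto A\setminus\{\max A\}$ makes rigorous what the paper simply calls ``trivially,'' and your justification that any nonzero $f\in V_I$ with $k$ distinct roots must have degree exactly $k$ is likewise a detail the paper leaves implicit.
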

\begin{proof}
If $A\in S$, then $A$ must be the set of $k$ pairwise distinct roots of some polynomial $f(x)=x^k+\sum_{i=0}^{k-2}c_ix^i$ in $V_I$.
By the Vieta's theorem, the sum of all roots of $f(x)$ is zero.
Let $\{\Ga_{i_1},\Ga_{i_2},\cdots,\Ga_{i_k}\}$ be the set of roots of $f(x)$. Thus, we have
$$\sum_{j=1}^k \Ga_{i_j}=\Ga_{i_1}+\Ga_{i_2}+\cdots+\Ga_{i_k}=0.$$
Trivially, the size of $S$ can be upper bounded by $$|S|\le \binom{q}{k-1}.$$
For every set $A\in S$, there are at most $\binom{q-k}{n-k}$ sets $T$ satisfying $|T|=n$ and $A\in T_k$.
Hence, there are at most $\binom{q}{k-1} \binom{q-k}{n-k}$ sets $T\subseteq\F_q$ such that $|T|=n$ and $T_k\cap S\neq\varnothing$.
If  $\binom{q}{n}>\binom{q}{k-1} \binom{q-k}{n-k}$, then there exists a set $T\subseteq\F_q$ with cardinality $|T|=n$ such that $$T_k\cap S=\varnothing.$$
Moreover, it is easy to see that $I=\{0,1,2,\cdots,k-2,k\}$ is a non-arithmetic progression set. Now this theorem follows from Proposition \ref{prop:3.1}.
\end{proof}

{\begin{rem}
By Theorem \ref{thm:4.2}, if $n(n-1)\cdots (n-k+1)<(k-1)! \cdot (q-k+1)$, then there exists an $[n, k]$ non-RS MDS code.
Hence, the length of the non-RS MDS codes can achieve $O(q^{\frac{1}{k}})$. Note that non-RS MDS codes can be constructed with length $n=O(q^{\frac{1}{k}})$ in \cite{Chen24} using algebraic curves.
\end{rem}}

\section{Explicit constructions of non-RS MDS codes}\label{sec:5.1}
In this section, we provide explicit constructions of non-RS MDS codes applying the framework in Section \ref{sec:3}.
In particular, we focus on the case $I=\{0,1,2,\cdots,k-2,k\}$ and $V_I=\span_{\F_q}\{x^i: i\in I\}$.
In order to obtain explicit constructions of non-RS MDS codes, we need to elaborately choose the set of evaluation points $T=\{\Ga_1,\Ga_2,\cdots,\Ga_n\}$ such that any polynomial $f(x)\in V_I$ has at most $k-1$ zeros in the evaluation set $T$.

\begin{prop}\label{prop:4.3}
Let $n$ and $k$ be positive integers with $3\le k\le n/2$.
Let $I$ be the set $\{0,1,2,\cdots,k-2,k\}$ and $V_I$ be the vector space spanned by the set $\{x^i: i\in I\}$ over $\F_q$.
Let $T=\{\Ga_1,\Ga_2,\cdots, \Ga_n\}$ be a subset of $\F_q$ with cardinality $n$ such that all its $k$-subset sums of $T$ are nonzero, i.e.,
\begin{equation}\label{eq:3}
\sum_{j=1}^k \Ga_{i_j}\neq 0
\end{equation}
for all pairwise distinct elements $\Ga_{i_j}$ with $1\le i_1<i_2<\cdots<i_k\le n$.
Then the code $C(T,I)=\{(f(\alpha_1),f(\alpha_2),\cdots,f(\alpha_n)): f\in V_I\}$ is an $[n,k]$ non-RS MDS code.
\end{prop}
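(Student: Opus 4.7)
The plan is to invoke the general framework of Proposition~\ref{prop:3.1}, so it suffices to prove two things: (a) $T_k\cap S=\varnothing$, which will yield the MDS property, and (b) that $I=\{0,1,\dots,k-2,k\}$ is a non-arithmetic progression, which (together with (a)) will give the non-RS conclusion via Proposition~\ref{prop:3.1}. Part (b) is immediate: for $k\ge 3$, the gaps between consecutive elements of $I$ are $1,1,\dots,1,2$, so $I$ is not an arithmetic progression.

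For part (a), I would argue by contradiction. Suppose $A=\{\alpha_{i_1},\alpha_{i_2},\dots,\alpha_{i_k}\}\in T_k\cap S$ with $1\le i_1<i_2<\cdots<i_k\le n$. By the definition of $S$, there exists a nonzero $f\in V_I$ with $A\subseteq Z(f)$, so $f$ has at least $k$ distinct roots in $\F_q$. Since $\deg f\le k$ and $f$ has $k$ distinct roots, $f$ must have degree exactly $k$ (if $\deg f\le k-2$, the polynomial would have too many roots and hence be zero) and the elements of $A$ are precisely the roots of $f$, each with multiplicity one.

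Next I would normalize $f$ to be monic; writing $f(x)=x^k+\sum_{i=0}^{k-2}c_i x^i$, the crucial structural observation is that the coefficient of $x^{k-1}$ is zero, because $k-1\notin I$. Applying Vieta's formula to $f$ then gives
\[
\sum_{j=1}^{k}\alpha_{i_j} \;=\; -[\text{coefficient of }x^{k-1}\text{ in }f] \;=\; 0,
\]
which directly contradicts the hypothesis \eqref{eq:3}. Therefore $T_k\cap S=\varnothing$, and Proposition~\ref{prop:3.1} completes the proof.

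There is essentially no hard step here; the proof is a clean application of the framework. The only subtlety to be careful about is the multiplicity issue in the previous paragraph: the hypothesis only guarantees a set containment $A\subseteq Z(f)$, but since $|A|=k$ and $\deg f\le k$, the containment is forced to be an equality of multisets (roots each appearing with multiplicity one), which is exactly what is needed to apply Vieta's formula to conclude the sum of the $\alpha_{i_j}$'s equals zero.
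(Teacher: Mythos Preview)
Your proof is correct and follows essentially the same approach as the paper: both argue by contradiction via Vieta's theorem, using that the coefficient of $x^{k-1}$ in any monic $f\in V_I$ of degree $k$ vanishes, and then conclude through Proposition~\ref{prop:3.1} (together with the non-arithmetic-progression property of $I$). Your version is in fact a bit more explicit about why $\deg f=k$ and about the multiplicity issue, points the paper leaves implicit.
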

\begin{proof}
For any nonzero polynomial $f(x)\in V_I$, the Hamming weight of the codeword $(f(\alpha_1),f(\alpha_2),\cdots,f(\alpha_n))$ is at least $n-k$.
Suppose that $f(x)$ has $k$ pairwise distinct zeros $\{\Ga_{i_1},\Ga_{i_2},\cdots, \Ga_{i_k}\}$ in $T$.
By the Vieta's theorem, we obtain
$$\sum_{j=1}^k \Ga_{i_j}=0,$$
which contradicts to Equation \eqref{eq:3}.
Thus, the minimum distance of $C$ is at least $n-k+1$ which achieves the Singleton bound.
Hence, the code $C(T,I)$ is an $[n,k]$ non-RS MDS code from Lemma \ref{lem:2.1} and Proposition \ref{prop:3.1}.
\end{proof}

By Proposition \ref{prop:4.3}, the existence of the set $T$ with $T_k\cap S=\emptyset$ in Proposition \ref{prop:3.1} is converted to find a subset of $\F_q$ such that all its $k$-subset sums are nonzero. In the following,
our main goal is to find such a subset $T=\{\Ga_1,\Ga_2,\cdots,\Ga_n\}$ of $F_q$ satisfying
$ \sum_{j=1}^k \Ga_{i_j}\neq 0$
for all $k$-subsets $\{\Ga_{i_1}, \Ga_{i_2},\cdots, \Ga_{i_k}\}$ of $T$.

\begin{cor}\label{cor:4.4}
Let $p$ be an odd prime. Let $n$ and $k$ be positive integers with $3\le k\le n/2$.
If $2k\le n<\frac{p}{k}+\frac{k+1}{2}$, then there exists a $p$-ary $[n,k]$ non-RS MDS code.
\end{cor}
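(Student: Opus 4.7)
My plan is to invoke Proposition~\ref{prop:4.3} by exhibiting an explicit $n$-element subset $T\subseteq \F_p$ whose $k$-subset sums are all nonzero in $\F_p$. The natural candidate is a block of consecutive residues, and the cleanest choice turns out to be
\[
T=\{0,1,2,\ldots,n-1\},
\]
regarded inside $\F_p$ via the canonical embedding of $\{0,1,\ldots,p-1\}$. The very first thing to check is that these $n$ integers really are distinct mod $p$, i.e.\ that $n\le p$. This is forced by the hypotheses: combining $n\ge 2k$ with $n<p/k+(k+1)/2$ gives $p>k(3k-1)/2$, and together with $k\ge 3$ this comfortably yields $n<p$.

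The main step is then a one-line range computation on the $k$-subset sums. For indices $0\le i_1<i_2<\cdots<i_k\le n-1$, the integer
\[
s=i_1+i_2+\cdots+i_k
\]
is bounded below by $0+1+\cdots+(k-1)=k(k-1)/2$ and above by $(n-k)+(n-k+1)+\cdots+(n-1)=k(2n-k-1)/2$. I would note two facts about this interval. First, $k(k-1)/2\ge 3$ since $k\ge 3$, so $s$ is strictly positive. Second, a direct rearrangement shows that the hypothesis $n<p/k+(k+1)/2$ is \emph{equivalent} to $k(2n-k-1)/2<p$, so $s$ is strictly less than $p$. Hence every $k$-subset sum of $T$ is an integer in $\{1,2,\ldots,p-1\}$, and in particular nonzero in $\F_p$. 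Proposition~\ref{prop:4.3} with $I=\{0,1,\ldots,k-2,k\}$ then produces the desired $p$-ary $[n,k]$ non-RS MDS code $C(T,I)$.

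There is essentially no genuine obstacle in this argument: the bound $n<p/k+(k+1)/2$ is precisely the numerical threshold one obtains by demanding that the largest $k$-subset sum of $\{0,1,\ldots,n-1\}$ stay below $p$, so the hypothesis is perfectly matched to the construction. The only mildly technical point is ensuring that $T$ has $n$ distinct elements of $\F_p$, but as explained above this is automatic from the stated inequalities.
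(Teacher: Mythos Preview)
Your proof is correct and follows essentially the same approach as the paper: the paper also takes $T=\{0,1,\ldots,n-1\}\subseteq\F_p$, bounds every $k$-subset sum above by $kn-k(k+1)/2=k(2n-k-1)/2<p$ (and below by a positive integer), and invokes Proposition~\ref{prop:4.3}. Your version is in fact slightly more careful in explicitly checking $n<p$ and noting the equivalence of the hypothesis with the upper bound.
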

\begin{proof}
Let $p$ be an odd prime. Over the finite field $\F_p$, let $\Ga_i=i-1+p\mathbb{Z}\in \F_p$ for any $1\le i\le n$.
If $n<\frac{p}{k}+\frac{k+1}{2}$, then it is easy to verify that
$$0< \sum_{j=1}^k \Ga_{i_j}\leq (n-1)+(n-2)+\cdots+(n-k)=kn-\frac{k(k+1)}{2}<p$$
for any $1\le i_1<i_2<\cdots<i_k\le n$.
By Proposition \ref{prop:4.3}, the code $C(T,I)=\{(f(\Ga_1),f(\Ga_2),\cdots,f(\Ga_n)): f\in V_I\}$ is a $p$-ary $[n,k]$ non-RS MDS code.
\end{proof}

{\begin{rem}
In \cite{Chen24}, non-RS MDS codes are constructed from algebraic curves which can achieve $O(q^{1/k})$. Corollary \ref{cor:4.4} shows that our codes have much longer length than that given in \cite{Chen24} when $q$ is a prime. Moreover, comparing with the results in \cite{Beelen17} and \cite{LC24} where the length  $(n-1)|(q-1)$ or  $(n-1)|q$,
or $n|(q-1)$, we can provide more flexibility in the choice of code length, allowing us to cover a larger range of lengths.
\end{rem}}

\subsection{Non-RS MDS codes via weight numerators}
In this subsection, we provide an explicit construction of non-RS MDS codes via weight numerators of linear codes.
Let $A_k$ be the number of codewords of $C$ with Hamming weight $k$ and $\sum_{i=0}^n A_ix^i$ be the weight numerator of $C$.

\begin{prop}\label{prop:4.5}
Let $q$ be a prime power.
Let $n$ and $k$ be positive integers with $3\le k\le n/2$.
Let $A_k$ be the number of codewords of $C$ with Hamming weight $k$.
If there exists a $q$-ary $[n,n-r,d\ge 3]$-linear code $C$ with $A_k=0$, then there exists a $q^r$-ary $[n,k]$ non-RS MDS code.
\end{prop}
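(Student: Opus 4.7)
The plan is to apply the general framework of Section \ref{sec:3} with $I=\{0,1,\ldots,k-2,k\}$, and to construct the needed evaluation set from a parity-check matrix of the given code $C$. By Proposition \ref{prop:4.3}, it suffices to exhibit an $n$-subset $T=\{\alpha_1,\ldots,\alpha_n\}\subseteq \F_{q^r}$ whose every $k$-subset sum is nonzero.

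First, I would take a parity-check matrix $H\in \F_q^{r\times n}$ of $C$. Since $C$ has minimum distance at least $3$, no two columns of $H$ are $\F_q$-proportional; in particular the columns $h_1,h_2,\ldots,h_n\in \F_q^r$ are pairwise distinct and nonzero. Next, I would fix an $\F_q$-basis of $\F_{q^r}$, obtain the resulting $\F_q$-linear isomorphism $\phi:\F_q^r\to \F_{q^r}$, and set $\alpha_i=\phi(h_i)$, so that $T=\{\alpha_1,\ldots,\alpha_n\}$ is an $n$-element subset of $\F_{q^r}$.

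The core step is to translate the hypothesis $A_k=0$ into the required zero-sum condition on $T$. A codeword of $C$ of Hamming weight exactly $k$ corresponds to a choice of $k$ distinct indices $i_1,\ldots,i_k$ together with scalars $c_1,\ldots,c_k\in \F_q^*$ satisfying $\sum_{j=1}^k c_j h_{i_j}=0$; the assumption $A_k=0$ rules out all such configurations. Specializing to $c_1=\cdots=c_k=1$ gives $\sum_{j=1}^k h_{i_j}\neq 0$ for every $k$ distinct columns of $H$, and applying the $\F_q$-linear map $\phi$ yields $\sum_{j=1}^k \alpha_{i_j}\neq 0$ in $\F_{q^r}$ for every $k$-subset of $T$. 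Proposition \ref{prop:4.3} then produces the desired $q^r$-ary $[n,k]$ non-RS MDS code $C(T,I)$.

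The only delicate point is the bridge between the two settings: the combinatorial hypothesis $A_k=0$ lives in the $\F_q$-linear world of columns of $H$, while Proposition \ref{prop:4.3} asks for a zero-sum condition over the extension field $\F_{q^r}$. The $\F_q$-linearity of $\phi$ is exactly what transports one statement into the other; in fact only the specialization $c_j=1$ is used, so $A_k=0$ is somewhat stronger than what is strictly needed for the argument.
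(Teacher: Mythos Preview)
Your proposal is correct and follows essentially the same approach as the paper: take a parity-check matrix $H$ of $C$, use $d\ge 3$ to ensure the columns are pairwise distinct, use $A_k=0$ to rule out any $k$ columns summing to zero, identify $\F_q^r$ with $\F_{q^r}$ via an $\F_q$-basis, and invoke Proposition~\ref{prop:4.3}. Your added remark that only the specialization $c_j=1$ is actually needed (so $A_k=0$ is stronger than required) is a nice observation not made explicit in the paper.
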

\begin{proof}
Let $H=(h_1,h_2,\cdots,h_n)\in \F_q^{r\times n}$ be a parity-check matrix of $C$.
If there exist $k$ different columns such that $$h_{i_1}+h_{i_2}+\cdots+h_{i_k}=0,$$ then there exists a codeword of $C$ with Hamming weight $k$.
Thus, we derive a contradiction with the fact $A_k=0$.

Let $\{\beta_1,\beta_2,\cdots,\beta_r\}$ be a basis of $\F_{q^r}$ over $\F_q$.
Then  any $r$-dimensional column vector  $h_j=(a_{1,j},a_{2,j},\cdots,a_{r,j})^T$ over $\F_q$  can be identified with an element $\Ga_j=\sum_{i=1}^r a_{i,j} \beta_i\in \F_{q^r}$.
If $d\ge 3$, then any $d-1\ge 2$ columns of $H$ are linearly independent from \cite[Theorem 4.5.6]{LX04}.
Hence, columns $h_j$ can be viewed as pairwise distinct elements $\Ga_j$ in $\F_{q^r}$ satisfying
$$\sum_{j=1}^k \Ga_{i_j}\neq 0$$
for all pairwise distinct elements $\Ga_{i_j}$ with $1\le i_1<i_2<\cdots<i_k\le n$.
Now this proposition follows from \ref{prop:4.3}.
\end{proof}

\begin{cor}\label{cor:4.6}
Let $q$ be a prime power. If there exists a $q$-ary $[n,n-r,k+1]$-linear code, then there exists a $q^r$-ary $[n,k]$ non-RS MDS code.
\end{cor}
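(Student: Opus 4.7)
The plan is to deduce Corollary~\ref{cor:4.6} as an immediate consequence of Proposition~\ref{prop:4.5}. That proposition requires two hypotheses on the $q$-ary $[n, n-r]$-linear code~$C$: its minimum distance satisfies $d \geq 3$, and the number $A_k$ of its weight-$k$ codewords vanishes. So the entire task reduces to checking these two conditions from the single input ``minimum distance equals $k+1$''.

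First I would verify $d \geq 3$. Under the standing assumption $3 \leq k \leq n/2$ of this section, a $[n, n-r, k+1]$-linear code has minimum distance $d = k+1 \geq 4$, so the condition is automatic. Next I would verify $A_k = 0$: since every nonzero codeword of $C$ has Hamming weight at least $d = k+1 > k$, no codeword can have weight exactly $k$, hence $A_i = 0$ for every $1 \leq i \leq k$ and in particular $A_k = 0$.

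With both hypotheses in place, I would directly invoke Proposition~\ref{prop:4.5} to obtain a $q^r$-ary $[n, k]$ non-RS MDS code, completing the proof. There is no real obstacle to overcome here: the corollary is merely a repackaging of Proposition~\ref{prop:4.5} that replaces the weight-enumerator condition $A_k = 0$ by the cleaner and more familiar minimum-distance condition $d = k+1$. All the substantive content lives in Proposition~\ref{prop:4.5}, whose proof identifies the columns of a parity-check matrix of $C$ with pairwise-distinct elements of $\F_{q^r}$ having nonzero $k$-subset sums and then applies Proposition~\ref{prop:4.3}.
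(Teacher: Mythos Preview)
Your proposal is correct and follows exactly the paper's approach: the paper's own proof also simply observes that a minimum distance of $k+1$ forces $A_k=0$ and then invokes Proposition~\ref{prop:4.5}. Your version is in fact slightly more careful, since you explicitly verify the hypothesis $d\ge 3$ from $k\ge 3$, whereas the paper leaves this implicit.
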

\begin{proof}
Let $C$ be a $q$-ary $[n,n-r,k+1]$-linear code. It is easy to see that the number of codewords of $C$ with Hamming weight $k$ is zero.
By Proposition \ref{prop:4.5}, there exists a $q^r$-ary $[n,k]$ non-RS MDS code.
\end{proof}

\begin{ex}
From the theory of BCH codes  \cite[Proposition 8.1.14]{LX04} and propagation rules \cite[Theorem 6.1.1]{LX04}, there exists a $2^m$-ary $[2^m-1, 2^m-1-mt, 2t+1]$ BCH code. By Corollary \ref{cor:4.6}, there exists a $2^{mt}$-ary $[n=2^{m}-1,k=2t]$ non-RS MDS code. Such a non-RS MDS code has length $n=2^m-1=O(2^m)=O((2^{mt})^{2/k})$.
\end{ex}

Now we give some numerical examples using the known codes given in the code tables (see https://www.codetables.de).
\begin{ex}
\begin{itemize}
\item[(1)] From the code tables, there exists a $[256,240,5]_2$-linear code. For $q=2^{16}$, there exists a $q$-ary $[n=q^{1/2},k=4]$ non-RS MDS code by Corollary \ref{cor:4.6}.
\item[(2)] From the code tables, there exists a $[256,232,7]_2$-linear code. For $q=2^{24}$, there exists a $q$-ary $[n=q^{1/3},k=6]$ non-RS MDS code by Corollary \ref{cor:4.6}.
\item[(3)] From the code tables, there exists a $[243,235,4]_3$-linear code. For $q=3^{8}$, there exists a $q$-ary $[n=q^{5/8},k=3]$  non-RS MDS code by Corollary \ref{cor:4.6}.
\item[(4)] From the code tables, there exist $[243,223,7]_3$-linear code. For $q=3^{20}$, there exists a $q$-ary $[n=q^{1/4},k=6]$  non-RS MDS code by Corollary \ref{cor:4.6}.
\end{itemize}
\end{ex}

\begin{ex}\label{ex:4.9}
By \cite[Proposition 5.3.10]{LX04}, the extended binary Hamming code $\overline{Ham(r,2)}$ is a binary $[2^r,2^r-1-r,4]$-linear code.
 By Corollary \ref{cor:4.6}, there exists a $q$-ary $[q/2,3]$ non-RS MDS code for $q=2^{r+1}$.
\end{ex}

\begin{ex}\label{ex:4.10}
The $q$-ary extended Hamming code  $\overline{Ham(r,q)}$  is a $q$-ary $[\frac{q^r-1}{q-1}+1,\frac{q^r-1}{q-1}-r,4]$-linear code.
By Corollary \ref{cor:4.6}, there exists a $q^{r+1}$-ary $[\frac{q^r-1}{q-1}+1,3]$ non-RS MDS code. Such a non-RS MDS code has length $n=\frac{q^r-1}{q-1}+1\ge (q^{r+1})^{\frac{r-1}{r+1}}$ for $r\ge 3$.
\end{ex}

The above Example \ref{ex:4.10} shows that when $k=3$, we can construct long non-RS codes with length $n$ approaching the alphabet size of the code for odd $q$ and sufficiently large $r$.

\begin{cor}\label{cor:4.11}
Let $r\ge 3$ be a positive integer and $q=2^{r+1}$. For any odd integer $3\le k\le q/4$, there exists a $q$-ary $[q/2,k]$ non-RS MDS code.
\end{cor}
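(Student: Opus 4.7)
The plan is to apply Proposition~\ref{prop:4.5} to a carefully chosen binary linear code whose weight spectrum avoids every odd weight. Specifically, I would take $C$ to be the extended binary Hamming code $\overline{Ham(r,2)}$, which (as already used in Example~\ref{ex:4.9}) is a binary $[2^r,\,2^r-1-r,\,4]$-linear code. In the notation of Proposition~\ref{prop:4.5}, this corresponds to setting the base alphabet to $\F_2$, the length $n = 2^r = q/2$, and the codimension equal to $r+1$, so that the resulting non-RS MDS code lives over $\F_{2^{r+1}} = \F_q$.

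The central observation is that every codeword of $\overline{Ham(r,2)}$ has even Hamming weight. This is immediate from the construction of the extension: one appends an overall parity bit to each codeword of $Ham(r,2)$, forcing the total weight to be congruent to $0 \pmod 2$. Consequently, $A_k = 0$ for every odd integer $k$, which is exactly the weight-spectrum hypothesis of Proposition~\ref{prop:4.5}. The distance hypothesis $d\ge 3$ also holds, since $d=4$.

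With these two inputs in place, Proposition~\ref{prop:4.5} yields a $2^{r+1}$-ary $[2^r, k]$ non-RS MDS code for every odd $k$ satisfying $3 \le k \le n/2$. Translating into the notation of the corollary, this gives a $q$-ary $[q/2, k]$ non-RS MDS code whenever $k$ is odd and $3 \le k \le 2^{r-1} = q/4$, which is exactly the conclusion to be proved.

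There is no substantive obstacle: the only nontrivial step is the parity claim for $\overline{Ham(r,2)}$, which is a standard textbook fact, and the rest is a direct invocation of Proposition~\ref{prop:4.5}. The corollary can therefore be viewed as an odd-$k$ extension of the $k=3$ construction of Example~\ref{ex:4.9}, made possible precisely by the vanishing of all odd-weight terms in the weight enumerator of the extended Hamming code.
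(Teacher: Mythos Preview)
Your proposal is correct and matches the paper's proof essentially line for line: both invoke the extended binary Hamming code $\overline{Ham(r,2)}$, use the fact that all its codewords have even weight to get $A_k=0$ for odd $k$, and then apply Proposition~\ref{prop:4.5} with base field $\F_2$ and codimension $r+1$. The only difference is that you spell out the parity-bit justification and the check $3\le k\le n/2=q/4$, which the paper leaves implicit.
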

\begin{proof}
The extended Hamming code $\overline{Ham(r,2)}$  is a binary $[2^r,2^r-1-r,4]$-linear code.
Moreover, the Hamming weights of its codewords are even.
By Proposition \ref{prop:4.5}, there exists a $q$-ary $[q/2,k]$ non-RS MDS code.
\end{proof}

\begin{rem} From the above examples, we can see that non-RS MDS codes can be constructed explicitly, with lengths that improve upon the existence results presented in Section \ref{sec:4}. Moreover, the length of our non-RS MDS codes can achieve $q/2$ for $q=2^{r+1}$ and odd integer $3\le k\le q/4$.
\end{rem}

\subsection{Non-RS MDS codes via vector representations}
Motivated by the above construction, here we provide another construction of non-RS MDS codes by choosing the evaluation set $\{\Ga_1,\Ga_2,\cdots, \Ga_n\}$ deliberately such that $\sum_{j=1}^k \Ga_{i_j}\neq 0$.

Let $q=p^m$ be a power of $p$. Let $\F_q$ be the finite field with $q$ elements.
Let $\zeta$ be a generator of the extension field $\F_q=\F_p(\zeta)$ over $\F_p$.
Then the minimum polynomial of $\zeta$ over $\F_p$ has degree $m$, and
$\{1, \zeta,\zeta^2,\cdots, \zeta^{m-1}\}$ is a basis of $\F_q$ over $\F_p$.
The Equation \eqref{eq:3} in Proposition \ref{prop:4.3} can be satisfied if we
choose $\Ga_i=1+\sum_{j=1}^{m-1} a_{i,j} \zeta^j$ for any positive integer $1\le i\le n$.

\begin{theorem}\label{thm:4.12}
Let $p$ be a prime and $\F_q=\F_p(\zeta)$ be the finite field with $q=p^m$ elements.
Let $I$ be the set $\{0,1,2,\cdots,k-2,k\}$ and $V_I$ be the vector space spanned by $\{x^i: i\in I\}$ over $\F_q$.
Let $n$ and $k$ be positive integers with $n\le p^{m-1}$, $p\nmid k$ and $3\le k\le n/2$.
Let $T=\{\Ga_1,\Ga_2,\cdots, \Ga_n\}$ be a subset of $\F_q$ with  $\Ga_i=1+\sum_{i=1}^{m-1} a_{i,j} \zeta^j$  for any $1\le i\le n$.
Then the code $C(T,I)=\{(f(\alpha_1),f(\alpha_2),\cdots,f(\alpha_n)): f\in V_I\}$ is a $q$-ary $[n,k]$ non-RS MDS code.
\end{theorem}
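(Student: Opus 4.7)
The plan is to invoke Proposition \ref{prop:4.3} directly: once we verify that every $k$-subset sum from $T$ is nonzero in $\F_q$, the conclusion that $C(T,I)$ is an $[n,k]$ non-RS MDS code follows immediately. So the whole task reduces to checking Equation \eqref{eq:3} for the specific family of evaluation points proposed in the statement.

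First I would confirm that such a set $T$ of the required size actually exists. The affine subset
\[
W=\Bigl\{\,1+\sum_{j=1}^{m-1}a_{j}\zeta^{j}:\;a_{j}\in\F_{p}\,\Bigr\}\subseteq\F_{q}
\]
has cardinality $p^{m-1}$, because $\{1,\zeta,\ldots,\zeta^{m-1}\}$ is an $\F_p$-basis of $\F_q$ and so distinct choices of the tuple $(a_{1},\ldots,a_{m-1})$ yield distinct elements. Since $n\le p^{m-1}$, one can pick $n$ pairwise distinct elements $\Ga_{1},\ldots,\Ga_{n}$ from $W$.

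Next I would carry out the key computation. For any $k$-subset $\{i_{1}<i_{2}<\cdots<i_{k}\}\subseteq\{1,2,\ldots,n\}$, we have
\[
\sum_{\ell=1}^{k}\Ga_{i_\ell}\;=\;\sum_{\ell=1}^{k}\Bigl(1+\sum_{j=1}^{m-1}a_{i_\ell,j}\zeta^{j}\Bigr)\;=\;k\cdot 1+\sum_{j=1}^{m-1}\Bigl(\sum_{\ell=1}^{k}a_{i_\ell,j}\Bigr)\zeta^{j}.
\]
Reading off coordinates in the $\F_p$-basis $\{1,\zeta,\ldots,\zeta^{m-1}\}$, the coefficient of $\zeta^{0}$ equals $k\bmod p$. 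Under the hypothesis $p\nmid k$, this coefficient is nonzero in $\F_p$, so the whole sum is nonzero in $\F_q$. In other words, no $k$-subset sum of $T$ vanishes.

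Finally, applying Proposition \ref{prop:4.3} to this $T$ delivers that $C(T,I)$ is an $[n,k]$ non-RS MDS code. I do not anticipate any real obstacle: the argument is essentially a one-line linear-algebra observation, and the condition $p\nmid k$ is used precisely to keep the $\zeta^{0}$-coordinate of every $k$-subset sum nonzero, while $n\le p^{m-1}$ only guarantees that $|W|\ge n$ so that enough evaluation points are available.
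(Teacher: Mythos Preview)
Your proposal is correct and follows essentially the same approach as the paper: compute the $k$-subset sum, observe that its $\zeta^{0}$-coordinate equals $k\bmod p\neq 0$, and invoke Proposition~\ref{prop:4.3}. Your write-up is slightly more detailed (you spell out why $|W|=p^{m-1}$ so that $n$ distinct points exist), but the argument is the same.
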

\begin{proof}
From the choices of $\Ga_i$ and $p\nmid k$, it is easy to verify that
$$\sum_{j=1}^k \Ga_{i_j}=k+\sum_{j=1}^{m-1} c_j\zeta^j\neq 0$$
with some $c_j\in \F_p$ for any $1\le i_1<i_2<\cdots<i_k\le n$.
By Proposition \ref{prop:4.3}, the code $C(T,I)$ is a $q$-ary $[n,k]$ non-RS MDS code.
\end{proof}

In \cite{RL}, Roth and Lempel provided a construction of non-RS MDS codes where they transform the  construction of a $q$-ary $[n+2,k+1]$ non-RS MDS code  into a combinatorial problem that determines the maximum value $R(k,q)$ of the set
	$\{n: \exists \text{ distinct } \Gb_1,\Gb_2,\cdots,\Gb_n\in \F_q \text{ and } \delta\in \F_q \text{ such that } \sum_{j=1}^k \Gb_{i_j}\neq \delta \text{ for } 1\le i_1<i_2<\cdots<i_k\le n\}.$

 By Proposition \ref{prop:4.3}, our method of constructing a $q$-ary $[n,k]$ non-RS MDS code is converted to determining the maximum value $T(k,q)$ of the set
 $$\left\{n: \exists \text{ distinct } \Ga_1,\Ga_2,\cdots,\Ga_n\in \F_q \text{ s.t. } \sum_{j=1}^k \Ga_{i_j}\neq 0 \text{ for } 1\le i_1<i_2<\cdots<i_k\le n\right\}.$$
Since $0\in \F_q$, it is trivial that $T(k,q)\le R(k,q)$. On the other hand, we can show $T(k,q)\ge R(k,q)$ under the assumption $p\nmid k$.

\begin{lemma}\label{lem:4.13}
If  $p\nmid k$, then we have $T(k,q)=R(k,q)$.
\end{lemma}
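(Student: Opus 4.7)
The plan is to establish the nontrivial inequality $T(k,q) \ge R(k,q)$ under the hypothesis $p \nmid k$ by an affine shift of the evaluation set; combined with the trivial bound $T(k,q) \le R(k,q)$ noted immediately above the lemma, this gives equality.

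First I would unpack the definitions. Suppose $n = R(k,q)$ is attained by distinct elements $\beta_1,\beta_2,\dots,\beta_n \in \F_q$ and some $\delta \in \F_q$ satisfying $\sum_{j=1}^k \beta_{i_j} \ne \delta$ for all $1 \le i_1 < i_2 < \cdots < i_k \le n$. The goal is to produce distinct $\alpha_1,\dots,\alpha_n \in \F_q$ with $\sum_{j=1}^k \alpha_{i_j} \ne 0$ for all such index tuples, which would certify $T(k,q) \ge n = R(k,q)$.

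The key step is to use the assumption $p \nmid k$: it guarantees that $k$ is a unit in $\F_q$, so $\delta/k \in \F_q$ is well-defined. I would then set
\[
\alpha_i := \beta_i - \delta/k \quad \text{for } 1 \le i \le n.
\]
The $\alpha_i$ remain pairwise distinct because translation by a fixed scalar is a bijection of $\F_q$. For any $k$-subset indexed by $i_1 < i_2 < \cdots < i_k$, a direct computation gives
\[
\sum_{j=1}^k \alpha_{i_j} \;=\; \sum_{j=1}^k \beta_{i_j} - k \cdot \frac{\delta}{k} \;=\; \sum_{j=1}^k \beta_{i_j} - \delta \;\ne\; 0,
\]
where the inequality uses the defining property of the $\beta_i$'s and $\delta$. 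Hence $T(k,q) \ge n = R(k,q)$, and combined with $T(k,q) \le R(k,q)$ (which follows from $0 \in \F_q$) we obtain $T(k,q) = R(k,q)$.

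There is essentially no obstacle: the only subtlety is the role of the hypothesis $p \nmid k$, which is exactly what is needed to invert $k$ and carry out the shift by $\delta/k$. If $p \mid k$, the translation argument collapses because the $k$-subset sums are invariant under a uniform shift, and the reduction from $R(k,q)$ to $T(k,q)$ genuinely fails in general, showing that the hypothesis cannot simply be dropped.
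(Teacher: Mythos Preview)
Your proof is correct and follows essentially the same approach as the paper: both establish $T(k,q)\ge R(k,q)$ by taking an optimal configuration $\beta_1,\dots,\beta_n,\delta$ for $R(k,q)$ and translating via $\alpha_i=\beta_i-\delta/k$ (using $p\nmid k$ to invert $k$), then combining with the trivial inequality $T(k,q)\le R(k,q)$.
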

\begin{proof}
It remains to prove that $T(k,q)\ge R(k,q)$ for $p\nmid k$.
Let $n=R(k,q)$. Then there exist pairwise distinct elements $\Gb_1,\Gb_2,\cdots,\Gb_n\in \F_q$ and some element $\delta\in \F_q$ such that
$$ \sum_{j=1}^k \Gb_{i_j}\neq \delta $$
for any $1\le i_1<i_2<\cdots<i_k\le n$.
If $p\nmid k$, then we let $\Ga_i=\Gb_i-\delta/k$ for any $1\le i\le n$. It is easy to verify that
$$  \sum_{j=1}^k \Ga_{i_j}=\sum_{j=1}^k (\Gb_{i_j}-\delta/k)=\sum_{j=1}^k \Gb_{i_j}-k\cdot (\delta/k)=\sum_{j=1}^k \Gb_{i_j} -\delta\neq 0 $$
for any $1\le i_1<i_2<\cdots<i_k\le n$. Hence, we have $T(k,q)\ge n=R(k,q)$.
\end{proof}

For $q=2^m$ with $m\ge 2$, we have
$R(3,q)=R(q/2-2,q)=R(q/2-1,q)=q/2+1$,
$R(k,q)=q/2$ for $4\le k\le q/2-3$ and $R(k,q)=k+2$ for $q/2-1\le k\le q-2$ by \cite[Lemma 4, Lemma 5 and Lemma 6]{RL}. Hence, we can obtain the following results by Lemma \ref{lem:4.13} which are similar to Example \ref{ex:4.9} and Corollary \ref{cor:4.11}.

\begin{theorem}\label{thm:4.14}
Let $q=2^m$ be a prime power. There exists a $q$-ary $[q/2+1,3]$ non-RS MDS code.
Moreover, for any odd integer $5\le k\le q/4$, there exists a $q$-ary $[q/2,k]$ non-RS MDS code.
\end{theorem}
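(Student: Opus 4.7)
The plan is to deduce this theorem directly from Proposition \ref{prop:4.3}, Lemma \ref{lem:4.13}, and the Roth--Lempel values of $R(k,q)$ recalled immediately above the statement. Since the characteristic is $p=2$ and every $k$ under consideration is odd ($k=3$, or $5\le k\le q/4$ with $k$ odd), the hypothesis $p\nmid k$ of Lemma \ref{lem:4.13} is automatically satisfied; this lets me transfer the known values of $R(k,q)$ over to $T(k,q)$, which is precisely the quantity that Proposition \ref{prop:4.3} consumes.

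For the first claim, I would take $k=3$ and apply Lemma \ref{lem:4.13} to conclude $T(3,q)=R(3,q)=q/2+1$. This produces pairwise distinct $\alpha_1,\ldots,\alpha_{q/2+1}\in\F_q$ whose every three-element subset has nonzero sum. Setting $T=\{\alpha_1,\ldots,\alpha_{q/2+1}\}$, Proposition \ref{prop:4.3} then yields the desired $[q/2+1,3]$ non-RS MDS code, provided $n\ge 2k$. The check $q/2+1\ge 6$ holds as soon as $q\ge 16$; the very small cases $q=4,8$ can either be handled by inspection or excluded, since the parameters degenerate and the Roth--Lempel formulas are typically stated for $m\ge 2$ regardless.

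For the second claim, I would fix an odd $k$ with $5\le k\le q/4$. The bound $k\le q/4$ implies $k\le q/2-3$ whenever $q\ge 12$, which puts us exactly in the regime where the Roth--Lempel result gives $R(k,q)=q/2$. Lemma \ref{lem:4.13} then upgrades this to $T(k,q)=q/2$, producing distinct $\alpha_1,\ldots,\alpha_{q/2}\in\F_q$ with every $k$-subset sum nonzero. The bound $n=q/2\ge 2k$ that Proposition \ref{prop:4.3} requires is precisely the hypothesis $k\le q/4$, so Proposition \ref{prop:4.3} applies and delivers the $[q/2,k]$ non-RS MDS code.

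The main obstacle is not really in this theorem itself but in the inputs: Lemma \ref{lem:4.13} and the $R(k,q)$-values from Roth--Lempel are the nontrivial combinatorial content. Once those are in hand, the only verifications here are the elementary arithmetic inequalities that guarantee $n\ge 2k$ and $p\nmid k$ simultaneously, both of which follow immediately from $k$ being odd together with $k\le q/4$. I therefore expect the proof to be a short assembly of citations plus these inequality checks.
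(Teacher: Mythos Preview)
Your proposal is correct and follows essentially the same route as the paper. The paper does not write out a formal proof of this theorem; it simply remarks that the result follows from Lemma \ref{lem:4.13} together with the Roth--Lempel values $R(3,q)=q/2+1$ and $R(k,q)=q/2$ for $4\le k\le q/2-3$, which is exactly your argument (your added checks that $p\nmid k$ and $n\ge 2k$ are the right details to fill in).
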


Let $q$ be a power of an odd prime $p$. By \cite[Lemma 8]{RL}, we have $R(k,q)\ge k+2$ for $3\le k\le (q-3)/2$. The exact value of $R(k,q)$ remains open for odd $q$.

\begin{theorem}\label{thm:4.15}
Let $p$ be an odd prime, $q=p^m$ and $\F_q=\F_p(\zeta)$ be the finite field with $q$ elements.
Let $k$ be a positive integer with $3\le k\le p-1$, let $I$ be the set $\{0,1,2,\cdots,k-2,k\}$ and $V_I$ be the vector space spanned by the set $\{x^i: i\in I\}$ over $\F_q$.
Let $n$ be a positive integer with $2k\le n\le [\frac{p}{k}]\cdot p^{m-1}+(p-k\cdot [\frac{p}{k}]-1)$.
Let $T=\{\Ga_1,\Ga_2,\cdots, \Ga_n\}$ be a subset of $\F_q$ with  $\Ga_i=a_{i,0}+\sum_{i=1}^{m-1} a_{i,j} \zeta^j$ with $1\le a_{i,0}\le [\frac{p}{k}]$ for any $1\le i\le [\frac{p}{k}]\cdot p^{m-1}$ and $a_{i,0}=[\frac{p}{k}]+1$ for any $[\frac{p}{k}]\cdot p^{m-1}+1\le i\le n$.
Then the code $C(T,I)=\{(f(\alpha_1),f(\alpha_2),\cdots,f(\alpha_n)): f\in V_I\}$ is a  $q$-ary  $[n,k]$ non-RS MDS code.
\end{theorem}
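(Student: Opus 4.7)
The plan is to directly verify the hypothesis of Proposition \ref{prop:4.3}: every $k$-element subset of $T$ has a nonzero sum in $\F_q$. Since $\{1,\zeta,\ldots,\zeta^{m-1}\}$ is an $\F_p$-basis of $\F_q$ and each $\alpha_i$ is expressed in this basis, the sum $\sum_{j=1}^k \alpha_{i_j}$ vanishes in $\F_q$ if and only if each of its coordinate coefficients vanishes in $\F_p$. Thus it is enough to show that the constant coordinate $N:=\sum_{j=1}^k a_{i_j,0}$ is nonzero modulo $p$; the higher coordinates play no role, and the whole problem reduces to a purely integer-arithmetic estimate.

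Set $r=\lfloor p/k\rfloor$ and $s=p-kr$. Since $p$ is prime and $3\le k\le p-1$, we have $k\nmid p$, hence $1\le s\le k-1$. I would then split the chosen $k$-subset into $k_1$ elements drawn from ``part 1'' (indices $1\le i\le rp^{m-1}$, with $a_{i,0}\in\{1,\ldots,r\}$) and $k_2=k-k_1$ elements drawn from ``part 2'' (with $a_{i,0}=r+1$). The upper bound on $n$ gives $|\text{part 2}|\le p-kr-1=s-1$, so $k_2\le s-1\le k-2$, which in particular forces $k_1\ge 2$.

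The estimate is now immediate. Each part-1 summand contributes an integer in $\{1,\ldots,r\}$ and each part-2 summand contributes exactly $r+1$, so
\[
k+k_2 r \;=\; k_1\cdot 1+k_2(r+1)\;\le\; N \;\le\; k_1 r+k_2(r+1)\;=\;kr+k_2\;\le\;kr+(s-1)\;=\;p-1.
\]
Hence $N$ lies in the integer interval $[k,p-1]$, which is disjoint from $p\mathbb{Z}$, so $N\not\equiv 0\pmod p$. Consequently $\sum_{j=1}^k\alpha_{i_j}\neq 0$ in $\F_q$, and Proposition \ref{prop:4.3} (together with the observation that $I=\{0,1,\ldots,k-2,k\}$ is not an arithmetic progression) yields the $[n,k]$ non-RS MDS property.

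There is no deep obstacle; the argument is a carefully calibrated counting. The only delicate point is the size bound $|\text{part 2}|\le s-1$: it is chosen precisely so that the upper estimate $kr+k_2$ tops out at $p-1$ rather than reaching $p$. Any larger ``part 2'' would allow $N$ to land on a multiple of $p$ and break the argument. One should also briefly note that for $m\ge 2$ the prescribed $T$ indeed exists, since part 1 contributes $rp^{m-1}$ distinct elements and part 2 offers $p^{m-1}\ge p>s-1$ further distinct elements.
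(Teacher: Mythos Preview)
Your proof is correct and follows the same route as the paper: reduce to Proposition \ref{prop:4.3} by showing the constant-$1$ coordinate $\sum_{j=1}^k a_{i_j,0}$ is nonzero in $\F_p$. The paper compresses this into the phrase ``it is easy to verify,'' whereas you actually carry out the integer estimate $k\le N\le kr+k_2\le kr+(s-1)=p-1$; your splitting into part~1/part~2 and the observation $k_2\le s-1$ are exactly what is needed to justify that line.
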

\begin{proof}
From the choices of $\Ga_i$, it is easy to verify that
$$\sum_{j=1}^k \Ga_{i_j}=\sum_{j=1}^k a_{i_j,0}+\sum_{j=1}^{m-1}c_j\zeta^j\neq 0$$
with some $c_j\in \F_p$ for arbitrary pairwise distinct elements $\Ga_{i_j}$ in $T$.
By Proposition \ref{prop:4.3}, the code $C(T,I)$ is a $q$-ary $[n,k]$ non-RS MDS code.
\end{proof}

\section{A generalization of explicit non-RS MDS codes}\label{sec:6}
In Section \ref{sec:5.1}, we provide explicit constructions of non-RS MDS codes for $I=\{0,1,2,\cdots,k-2,k\}$.
In this section, we give a generalization of the explicit construction of non-RS MDS codes provided in Section \ref{sec:5.1}
for general  $I=\{0,1,2,\cdots,k-r-1, k-r+1,k-r+2,\cdots,k\}=\{0,1,2,\cdots,k\}\setminus \{k-r\}$ with $k\ge 3$ and $1\le r\le k-1$.
In particular, if $r=1$, then the set $I$ is exactly the one given in Section \ref{sec:5.1}.

\begin{prop}\label{prop:6.1}
Let $T=\{\Ga_1,\Ga_2,\cdots, \Ga_n\}$ be a subset of $\F_q$ with cardinality $n\ge 2k$ such that
\begin{equation}\label{eq:4}
\sum_{1\le j_1<\cdots<j_r\le k} \Ga_{i_{j_1}}\Ga_{i_{j_2}}\cdots \Ga_{i_{j_r}} \neq 0
\end{equation}
for all pairwise distinct elements $\Ga_{i_j}$ with $1\le i_1<i_2<\cdots<i_k\le n$.
Then the code $C(T,I)=\{(f(\alpha_1),f(\alpha_2),\cdots,f(\alpha_n)): f\in V_I\}$ is an $[n,k]$ non-RS MDS code.
\end{prop}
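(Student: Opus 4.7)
The plan is to proceed by direct analogy with the proof of Proposition~\ref{prop:4.3}, replacing the sum-of-roots condition (the first elementary symmetric polynomial) by the $r$-th elementary symmetric polynomial, and then invoking Proposition~\ref{prop:3.1} to conclude.

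First I would verify the MDS property by contradiction. Suppose some nonzero $f\in V_I$ has $k$ pairwise distinct zeros $\Ga_{i_1},\ldots,\Ga_{i_k}$ in $T$. Since $\deg f\le \max I=k$ and $f$ has $k$ distinct roots, $f$ has degree exactly $k$ with nonzero leading coefficient, so after rescaling we may assume $f$ is monic, i.e., $f(x)=\prod_{j=1}^{k}(x-\Ga_{i_j})$. Vieta's formulas then give the coefficient of $x^{k-r}$ in $f(x)$ as
\[
(-1)^{r}\,e_{r}(\Ga_{i_1},\ldots,\Ga_{i_k})=(-1)^{r}\!\!\sum_{1\le j_1<\cdots<j_r\le k}\!\Ga_{i_{j_1}}\cdots \Ga_{i_{j_r}},
\]
where $e_r$ denotes the $r$-th elementary symmetric polynomial. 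Because $k-r\notin I$ and $f\in V_I$, this coefficient must vanish, contradicting hypothesis~\eqref{eq:4}. Hence every nonzero $f\in V_I$ has at most $k-1$ zeros in $T$, the Hamming weight of every nonzero codeword is at least $n-k+1$, and so $C(T,I)$ meets the Singleton bound and is $[n,k,n-k+1]$-MDS.

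Next I would verify the non-Reed-Solomon property by checking that $I$ is not an arithmetic progression and then applying Proposition~\ref{prop:3.1}. Since $1\le r\le k-1$, the omitted element $k-r$ lies strictly between $0$ and $k$, so the sequence of consecutive differences in $I=\{0,1,\ldots,k\}\setminus\{k-r\}$ consists of $1$'s together with exactly one $2$; because $|I|=k\ge 3$, at least one other difference equals $1$, so the differences are not all equal and $I$ is a non-arithmetic progression. The MDS argument above shows that hypothesis~\eqref{eq:4} is exactly the assertion $T_k\cap S=\varnothing$ in the notation of Section~\ref{sec:3}, so Proposition~\ref{prop:3.1} applies and yields that $C(T,I)$ is non-RS.

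The main obstacle is really just the Vieta's identification of the $x^{k-r}$-coefficient with $e_r$, which is only a mild generalisation of the $r=1$ computation already carried out in Proposition~\ref{prop:4.3}; once this identification is in hand, the remaining verification is routine bookkeeping together with the structural result of Proposition~\ref{prop:3.1}.
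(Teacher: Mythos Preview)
Your proof is correct and follows essentially the same approach as the paper: argue by contradiction that a nonzero $f\in V_I$ with $k$ distinct zeros in $T$ forces the $r$-th elementary symmetric function of those zeros to vanish via Vieta, contradicting \eqref{eq:4}, and then invoke Proposition~\ref{prop:3.1} (together with the fact that $I$ is a non-arithmetic progression) for the non-RS conclusion. Your write-up is in fact slightly more detailed than the paper's, as you make explicit why $f$ must have degree exactly $k$ and why $I$ fails to be an arithmetic progression.
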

\begin{proof}
For any nonzero polynomial $f(x)\in V_I$, the Hamming weight of the codeword $(f(\alpha_1),f(\alpha_2),\cdots,f(\alpha_n))$ is at least $n-k$.
Suppose that $f(x)$ has $k$ pairwise distinct zeros $\{\Ga_{i_1},\Ga_{i_2},\cdots, \Ga_{i_k}\}$ in $T$.
By the Vieta's theorem, we obtain
$$\sum_{1\le j_1<\cdots<j_r\le k} \Ga_{i_{j_1}}\Ga_{i_{j_2}}\cdots \Ga_{i_{j_r}}=0,$$
which contradicts to Equation \eqref{eq:4}.
Thus, the minimum distance of $C(T,I)$ is at least $n-k+1$ which achieves the Singleton bound.
Hence, the code $C(T,I)$ is an $[n,k]$ non-RS MDS code from Lemma \ref{lem:2.1} and Proposition \ref{prop:3.1}.
\end{proof}

In the following, our main goal is to explicitly construct a subset $T=\{\Ga_1,\Ga_2,\cdots,\Ga_n\}$ of $F_q$ satisfying
$ \sum_{1\le j_1<\cdots<j_r\le k} \Ga_{i_{j_1}}\Ga_{i_{j_2}}\cdots \Ga_{i_{j_r}}\neq 0$
for all $k$-subsets $\{\Ga_{i_1}, \Ga_{i_2},\cdots, \Ga_{i_k}\}$ of $T$.

\begin{cor}\label{cor:6.2}
Let $p$ be an odd prime. Let $n$ and $k$ be positive integers with $3\le k\le n/2$.
If $n\le \sqrt[r]{r!\cdot p}/k$, then there exists a $p$-ary $[n,k]$ non-RS MDS code.
\end{cor}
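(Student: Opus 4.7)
The plan is to apply Proposition \ref{prop:6.1} with the most natural choice of evaluation set. Specifically, I would take $T=\{\Ga_1,\Ga_2,\ldots,\Ga_n\}$ where $\Ga_i$ is the residue class of the integer $i$ modulo $p$, for $1\le i\le n$. Since the hypothesis $n\le \sqrt[r]{r!\cdot p}/k$ already forces $n<p$, the elements $\Ga_1,\ldots,\Ga_n$ are pairwise distinct in $\F_p$, so $T$ is a legitimate evaluation set of size $n$.

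For any $k$-subset $\{\Ga_{i_1},\ldots,\Ga_{i_k}\}$ of $T$ I then need to verify
\[
e_r := \sum_{1\le j_1<\cdots<j_r\le k}\Ga_{i_{j_1}}\Ga_{i_{j_2}}\cdots \Ga_{i_{j_r}}\neq 0\quad\text{in }\F_p.
\]
Viewed as an integer, $e_r$ is a sum of products of distinct positive integers chosen from $\{1,2,\ldots,n\}$, hence $e_r\ge 1$. Bounding each of the $\binom{k}{r}$ summands crudely by $n^r$ gives $e_r\le \binom{k}{r}n^r$. The hypothesis rearranges to $n^rk^r\le r!\cdot p$, so
\[
e_r \le \binom{k}{r}n^r \;=\; \frac{k(k-1)\cdots(k-r+1)}{r!}\cdot n^r \;\le\; \frac{k^r}{r!}\cdot \frac{r!\cdot p}{k^r}\;=\; p.
\]
To upgrade this to a strict inequality, I would separate two cases: for $r\ge 2$ the quotient $k(k-1)\cdots(k-r+1)/k^r$ is strictly less than $1$, so $e_r<p$ directly; for $r=1$ the tighter estimate $e_1\le n+(n-1)+\cdots+(n-k+1)=kn-k(k-1)/2<kn\le p$ applies, using $k\ge 3$. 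In every case $1\le e_r<p$, so $e_r\not\equiv 0\pmod p$, and the hypothesis of Proposition \ref{prop:6.1} is verified, producing the required $p$-ary $[n,k]$ non-RS MDS code.

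The argument is essentially a pigeonhole bound on an integer-valued elementary symmetric polynomial, so there is no substantive obstacle. The only point that needs care is the strictness of $e_r<p$, which is why the $r=1$ case is dispatched separately from $r\ge 2$; once that is observed, the remainder is mechanical and the conclusion follows immediately from Proposition \ref{prop:6.1}.
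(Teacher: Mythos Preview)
Your proof is correct and follows essentially the same approach as the paper: choose $T$ to be an initial segment of integers in $\F_p$, bound the integer value of the $r$-th elementary symmetric function strictly between $0$ and $p$, and invoke Proposition~\ref{prop:6.1}. The only cosmetic differences are that the paper takes $\Ga_i=i-1$ rather than $\Ga_i=i$, and it obtains the strict inequality $e_r<p$ uniformly in $r$ via $(n-1)(n-2)\cdots(n-r)<n^r$ instead of splitting into the cases $r=1$ and $r\ge 2$.
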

\begin{proof}
Let $p$ be an odd prime. Over the finite field $\F_p$, let $\Ga_i=i-1+p\mathbb{Z}\in \F_p$ for any $1\le i\le n$.
If $n\le \sqrt[r]{r!\cdot p}/k$, then it is easy to verify that
$$0< \sum_{1\le j_1<\cdots<j_r\le k} \Ga_{i_{j_1}}\Ga_{i_{j_2}}\cdots \Ga_{i_{j_r}}\leq \binom{k}{r} (n-1)(n-2)\cdots (n-r)< n^rk^r/r!\le p.$$
By Proposition \ref{prop:6.1}, there exists a $p$-ary $[n,k]$ non-RS MDS code.
\end{proof}

\begin{theorem}\label{thm:6.3}
Let $p$ be a prime and $\F_q=\F_p(\zeta)$ be the finite field with $q=p^m$ elements.
Let $k$ and $r$ be positive integers with $k\ge 3$, $1\le r\le k-1$ and $p\nmid \binom{k}{r}$.
Let $I$ be the set $\{0,1,2,\cdots,k-r-1, k-r+1,k-r+2,\cdots,k\}$ and $V_I$ be the vector space spanned by $\{x^i: i\in I\}$ over $\F_q$.
Let  $t=\lfloor \frac{m-1}{r}\rfloor$ and $n$ be a positive integer with $2k\le n\le p^{t}$.
Let $T=\{\Ga_1,\Ga_2,\cdots, \Ga_n\}$ be a subset of $\F_q$ with  $\Ga_i=1+\sum_{i=1}^{t} a_{i,j} \zeta^j$  for any $1\le i\le p^t$.
Then the code $C(T,I)=\{(f(\alpha_1),f(\alpha_2),\cdots,f(\alpha_n)): f\in V_I\}$ is a $q$-ary $[n,k]$ non-RS MDS code.
 \end{theorem}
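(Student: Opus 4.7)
The plan is to invoke Proposition \ref{prop:6.1}: once we verify that
$$\sigma_r\;:=\;\sum_{1\le j_1<\cdots<j_r\le k}\alpha_{i_{j_1}}\alpha_{i_{j_2}}\cdots\alpha_{i_{j_r}}\;\neq\;0$$
for every choice of $1\le i_1<\cdots<i_k\le n$, that proposition immediately produces an $[n,k]$ non-RS MDS code. First, $T$ does contain the required $n$ pairwise distinct elements: since $\{1,\zeta,\ldots,\zeta^{m-1}\}$ is an $\F_p$-basis of $\F_q$, the $p^t$ tuples $(a_{i,1},\ldots,a_{i,t})\in\F_p^t$ give $p^t$ pairwise distinct elements of $\F_q$ of the prescribed form, and $n\le p^t$. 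Moreover $I=\{0,1,\ldots,k\}\setminus\{k-r\}$ is a non-arithmetic progression for $k\ge 3$ and $1\le r\le k-1$, so the non-RS conclusion in Proposition \ref{prop:6.1} is available.

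To verify the non-vanishing of $\sigma_r$, I would set $\beta_{i_s}:=\alpha_{i_s}-1\in\span_{\F_p}\{\zeta,\zeta^2,\ldots,\zeta^{t}\}$ and expand each factor $\alpha_{i_j}=1+\beta_{i_j}$. A routine double count, noting that each subset of $\{i_1,\ldots,i_k\}$ of size $\ell$ is contained in exactly $\binom{k-\ell}{r-\ell}$ subsets of size $r$, yields
$$\sigma_r\;=\;\sum_{\ell=0}^{r}\binom{k-\ell}{r-\ell}\sum_{1\le j_1<\cdots<j_\ell\le k}\beta_{i_{j_1}}\cdots\beta_{i_{j_\ell}}.$$
The $\ell=0$ term is the constant $\binom{k}{r}$, which is a nonzero element of $\F_p$ by the hypothesis $p\nmid\binom{k}{r}$. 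For $\ell\ge 1$, every monomial $\beta_{i_{j_1}}\cdots\beta_{i_{j_\ell}}$ is a product of $\ell$ elements of $\span_{\F_p}\{\zeta,\ldots,\zeta^t\}$, hence lies in $\span_{\F_p}\{\zeta^j:\ell\le j\le \ell t\}\subseteq\span_{\F_p}\{\zeta^j:1\le j\le rt\}$.

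The hypothesis $t=\lfloor(m-1)/r\rfloor$ forces $rt\le m-1$, so $\{1,\zeta,\ldots,\zeta^{rt}\}$ is contained in the $\F_p$-basis $\{1,\zeta,\ldots,\zeta^{m-1}\}$ of $\F_q$ and is therefore $\F_p$-linearly independent. Expressing $\sigma_r$ in this basis, its coefficient at $1$ is exactly $\binom{k}{r}\bmod p\neq 0$, while all remaining contributions are supported on the coordinates $\zeta,\zeta^2,\ldots,\zeta^{rt}$ and cannot cancel the constant. Hence $\sigma_r\neq 0$ and Proposition \ref{prop:6.1} yields the desired non-RS MDS code.

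The only subtle point, and the main obstacle the proof has to navigate, is the degree bookkeeping for the summands with $\ell\ge 1$: the choice $t=\lfloor(m-1)/r\rfloor$ is exactly the largest value for which every product of up to $r$ elements of $\span_{\F_p}\{\zeta,\ldots,\zeta^{t}\}$ still lies in $\span_{\F_p}\{\zeta,\ldots,\zeta^{m-1}\}$, guaranteeing that no high-degree monomial reduces modulo the minimal polynomial of $\zeta$ and leaks back into the constant coordinate where it could cancel $\binom{k}{r}$.
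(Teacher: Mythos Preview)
Your proof is correct and follows the same approach as the paper's: both verify the hypothesis of Proposition \ref{prop:6.1} by showing that the $r$-th elementary symmetric function of any $k$ chosen evaluation points has the shape $\binom{k}{r}+\sum_{j=1}^{rt}c_j\zeta^j$ with $rt\le m-1$, so its constant coefficient in the $\F_p$-basis $\{1,\zeta,\ldots,\zeta^{m-1}\}$ is nonzero. You supply more detail than the paper does (the explicit expansion via $\beta_{i_s}=\alpha_{i_s}-1$ and the double-counting identity, the verification that the $\alpha_i$ are distinct, and the remark that $I$ is a non-arithmetic progression), but the underlying argument is identical.
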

 \begin{proof}
  Let $\Ga_i=1+\sum_{j=1}^{t} a_{i,j} \zeta^j$ be elements of $\F_q$ for $1\le i\le p^t$.
 For any positive integer $k$ satisfying $\binom{k}{r}\not\equiv 0(\text{mod } p)$ and $1\le i_1<i_2<\cdots < i_k\le n$, we obtain
 $$\sum_{1\le j_1<\cdots<j_r\le k} \Ga_{i_{j_1}}\Ga_{i_{j_2}}\cdots \Ga_{i_{j_r}}=\binom{k}{r}+\sum_{j=1}^{rt} c_j \zeta^j\neq 0 $$
for some $c_j\in \F_p$ with $1\le j\le rt$.
By Proposition \ref{prop:6.1}, the code $C(T,I)$ is a $q$-ary $[n, k]$ non-RS MDS code.
 \end{proof}

Combining Corollary \ref{cor:6.2} with Theorem \ref{thm:6.3}, we have the following result.

\begin{theorem}\label{thm:6.4}
Let $p$ be an odd prime, $q=p^m$ and $\F_q=\F_p(\zeta)$ be the finite field with $q$ elements.
Let $k$ and $r$ be positive integers with $3\le k\le \sqrt[r]{r!\cdot p}$ and $1\le r\le k-1$.
Let $I$ be the set $\{0,1,2,\cdots,k-r-1, k-r+1,k-r+2,\cdots,k\}$ and $V_I$ be the vector space spanned by $\{x^i: i\in I\}$ over $\F_q$.
Let $t=\lfloor \frac{m-1}{r}\rfloor$ and let $n$ be a positive integer with $2k\le n\le [\frac{\sqrt[r]{r!\cdot p}}{k}]\cdot p^{t}$.
Let $T=\{\Ga_1,\Ga_2,\cdots, \Ga_n\}$ be a subset of $\F_q$ with  $\Ga_i=a_{i,0}+\sum_{i=1}^{t} a_{i,j} \zeta^j$ with $1\le a_{i,0}\le [\frac{\sqrt[r]{r!\cdot p}}{k}]$ for any $1\le i\le n$.
Then the code $C(T,I)=\{(f(\alpha_1),f(\alpha_2),\cdots,f(\alpha_n)): f\in V_I\}$ is a $q$-ary $[n,k]$ non-RS MDS code.
\end{theorem}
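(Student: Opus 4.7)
The plan is to apply Proposition~\ref{prop:6.1}: for any $k$-subset $\{\alpha_{i_1},\ldots,\alpha_{i_k}\}$ of $T$, I must verify that the $r$-th elementary symmetric function $e_r(\alpha_{i_1},\ldots,\alpha_{i_k}):=\sum_{1\le j_1<\cdots<j_r\le k}\alpha_{i_{j_1}}\cdots\alpha_{i_{j_r}}$ is nonzero in $\F_q$. Writing $\alpha_i=a_{i,0}+\beta_i$ with $\beta_i:=\sum_{j=1}^{t}a_{i,j}\zeta^j\in\span_{\F_p}\{\zeta,\zeta^2,\ldots,\zeta^t\}$, I would expand every $r$-fold product $\prod_{\ell=1}^{r}(a_{i_{j_\ell},0}+\beta_{i_{j_\ell}})$ according to the subset of positions from which the $\beta$-factor is chosen. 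Summing over all $r$-subsets $\{j_1,\ldots,j_r\}\subseteq\{1,\ldots,k\}$, the contribution from picking no $\beta$-factor collapses exactly to $e_r(a_{i_1,0},\ldots,a_{i_k,0})\in\F_p$, whereas a contribution that uses $s\ge 1$ of the $\beta$-factors lies in $\span_{\F_p}\{\zeta^s,\zeta^{s+1},\ldots,\zeta^{st}\}$.

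Since $t=\lfloor(m-1)/r\rfloor$ gives $rt\le m-1$, all such higher-order summands live in $\span_{\F_p}\{\zeta,\zeta^2,\ldots,\zeta^{m-1}\}$. Because $\{1,\zeta,\zeta^2,\ldots,\zeta^{m-1}\}$ is an $\F_p$-basis of $\F_q$, the $\F_p$-coordinate of $e_r(\alpha_{i_1},\ldots,\alpha_{i_k})$ along the basis vector $1$ is precisely $e_r(a_{i_1,0},\ldots,a_{i_k,0})\bmod p$, so it suffices to show this integer reduction is nonzero.

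The main obstacle, and really the only numerical content, is bounding this integer constant term. Setting $c:=[\sqrt[r]{r!\cdot p}/k]$, every $a_{i,0}\in\{1,2,\ldots,c\}$, hence
\begin{equation*}
1\le \binom{k}{r}\le e_r(a_{i_1,0},\ldots,a_{i_k,0})\le \binom{k}{r}c^r\le \binom{k}{r}\cdot\frac{r!\cdot p}{k^r}=\frac{k(k-1)\cdots(k-r+1)}{k^r}\cdot p\le p,
\end{equation*}
with the last inequality strict as soon as $r\ge 2$ (the $r=1$ case is precisely the content of Theorem~\ref{thm:4.15}). Thus $0<e_r(a_{i_1,0},\ldots,a_{i_k,0})<p$, which forces $e_r(\alpha_{i_1},\ldots,\alpha_{i_k})\ne 0$ in $\F_q$. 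Finally, the $\alpha_i$ are pairwise distinct because $\{1,\zeta,\ldots,\zeta^t\}$ is $\F_p$-linearly independent, and there are exactly $c\cdot p^t$ admissible tuples $(a_{i,0},a_{i,1},\ldots,a_{i,t})$, which comfortably accommodates $|T|=n$. Proposition~\ref{prop:6.1} then yields that $C(T,I)$ is a $q$-ary $[n,k]$ non-RS MDS code.
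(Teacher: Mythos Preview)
Your proof is correct and follows essentially the same route as the paper's: isolate the constant coefficient of $e_r(\alpha_{i_1},\ldots,\alpha_{i_k})$ in the $\F_p$-basis $\{1,\zeta,\ldots,\zeta^{m-1}\}$, identify it with the integer $e_r(a_{i_1,0},\ldots,a_{i_k,0})$, bound that integer strictly between $0$ and $p$, and then invoke Proposition~\ref{prop:6.1}. Your write-up is in fact more explicit than the paper's in justifying why the higher-order terms land in $\span_{\F_p}\{\zeta,\ldots,\zeta^{rt}\}$ (via $rt\le m-1$) and in checking distinctness and the count $c\cdot p^t$ of admissible evaluation points; the paper asserts these without comment.
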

\begin{proof}
Since $1\le a_{i,0}\le [\frac{\sqrt[r]{r!\cdot p}}{k}]$ for any $1\le i\le n$, it is easy to see that
 $$1\le \sum_{1\le j_1<\cdots<j_r\le k} a_{i_{j_1,0}}a_{i_{j_2,0}}\cdots a_{i_{j_r,0}}<\binom{k}{r}\frac{r!\cdot p}{k^r}\le p.$$
Hence, we have
 $$\sum_{1\le j_1<\cdots<j_r\le k} \Ga_{i_{j_1}}\Ga_{i_{j_2}}\cdots \Ga_{i_{j_r}}=\sum_{1\le j_1<\cdots<j_r\le k} a_{i_{j_1,0}}a_{i_{j_2,0}}\cdots a_{i_{j_r,0}}+\sum_{j=1}^{rt} c_j \zeta^j\neq 0 $$
for arbitrary pairwise distinct elements $\Ga_{i_j}$ in $T$.
By Proposition \ref{prop:6.1}, the code $C(T,I)$ is a $q$-ary $[n,k]$ non-RS MDS code.
\end{proof}

\begin{rem}{In fact, we can generalize our framework to a more general case using the idea of twist RS codes.
Let $V=span_{\F_q}\{1,x,\cdots,x^{k-r-1},x^{k-r+1},\cdots,x^{k-1},(-1)^r\delta x^{k-r}+x^k\}$ for any $1\le r\le k-1$. Then each monic polynomial $f\in V$ has the form $x^k+(-1)^r\delta x^{k-r}+\sum_{i=0}^{k-1}a_i x^{i}$. In this case, the corresponding code $C=\{(f(\alpha_1),f(\alpha_2),\cdots,f(\alpha_n)): f\in V\}$ is an $[n,k]$ non-RS MDS code if $\sum_{1\le j_1<\cdots<j_r\le k} \Ga_{i_{j_1}}\Ga_{i_{j_2}}\cdots \Ga_{i_{j_r}} \neq \delta$  for all $k$-subsets of $\{\Ga_1,\Ga_2,\cdots,\Ga_n\}$. Therefore, for $r=1$, the code $C=\{(f(\alpha_1),f(\alpha_2),\cdots,f(\alpha_n)): f\in V\}$ is an $[n,k]$ non-RS MDS code if $\sum_{1\le i_1<\cdots<i_r\le k} \Ga_{i_{j}} \neq \delta$ for all $k$-subsets of $\{\Ga_1,\Ga_2,\cdots,\Ga_n\}$.
However, we are only interested in the  case when $\delta=0$ in this paper. }\end{rem}

\section{Conclusions}

The main goal of this paper is to construct new families of non-RS MDS codes. We begin by presenting a general framework for constructing non-RS type MDS codes. This approach involves selecting an appropriate set of evaluation polynomials and a corresponding set of evaluation points such that all nonzero polynomials have at most
$k-1$ zeros within the evaluation set. Additionally, both the existence and explicit constructions of these codes are provided. Constructing non-RS MDS codes for all possible lengths remains an interesting and challenging problem. However, using our framework, we believe it is possible to obtain more new families of non-RS MDS codes.

\end{document}